\documentclass[a4paper]{article}
\usepackage[T1]{fontenc}
\usepackage[letterpaper, left=1in, right=1in, bottom=1in, top=1in]{geometry}
\usepackage{authblk}
\usepackage[pdfencoding=auto]{hyperref}
\usepackage{bookmark}
\usepackage{graphicx}
\usepackage{mathrsfs}
\usepackage{amsmath,amssymb}
\usepackage{amsthm}
\usepackage{algorithmic}
\usepackage{ascmac}
\usepackage{comment}
\usepackage{boites}
\usepackage[ruled,linesnumbered]{algorithm2e}

\newtheorem{theorem}{Theorem}
\newtheorem{corollary}{Corollary}
\newtheorem{lemma}{Lemma}
\newtheorem{observation}{Observation}

\theoremstyle{definition}

\newcommand{\trie}{\mathcal{T}}
\newcommand{\leaves}{\mathsf{leaves}}
\newcommand{\Root}{\mathbf{r}}
\newcommand{\height}{\mathsf{height}}
\newcommand{\rev}[1]{{#1}^{\mathit{R}}}

\newcommand{\lpp}{\mathit{lpp}}

\newcommand{\parent}{\mathsf{parent}}

\newcommand{\str}{\mathsf{str}}
\newcommand{\suf}{\mathsf{suf}}
\newcommand{\slink}{\mathsf{SL}}
\newcommand{\jump}{\mathsf{jump}}

\newcommand{\Substr}{\mathsf{Substr}}
\newcommand{\Suffix}{\mathsf{Suffix}}
\newcommand{\STree}{\mathsf{STree}}
\newcommand{\childchar}{\mathsf{childchar}}
\newcommand{\inSL}{\mathsf{inSLchar}}
\newcommand{\eertree}{\mathsf{eertree}}

\newcommand{\MPal}{\mathsf{MPal}}
\newcommand{\DPal}{\mathsf{DPal}}
\newcommand{\Dest}{\mathsf{destSL}}
\newcommand{\nil}{\mathit{nil}}
\newcommand{\subtree}{\mathsf{subtree}}
\newcommand{\lmDest}{\mathsf{leftDest}}

\begin{document}

\title{
Computing palindromes on a trie in linear time
}
\author[1]{Takuya~Mieno}
\author[2,3]{Mitsuru~Funakoshi}
\author[2,4]{Shunsuke~Inenaga}
\affil[1]{Department of Computer and Network Engineering,\par University of Electro-Communications.\par\texttt{tmieno@uec.ac.jp}}
\affil[2]{Department of Informatics, Kyushu University.\par\texttt{\{mitsuru.funakoshi.inenaga\}@inf.kyushu-u.ac.jp}}
\affil[3]{Japan Society for the Promotion of Science.}
\affil[4]{PRESTO, Japan Science and Technology Agency.}

\date{}

\maketitle

\begin{abstract}
  A trie $\trie$ is a rooted tree
  such that each edge is labeled by a single character from the alphabet,
  and the labels of out-going edges from the same node are mutually distinct.
  Given a trie $\trie$ with $n$ edges, we show how to compute
  all distinct palindromes and all maximal palindromes on $\trie$
  in $O(n)$ time,
  in the case of integer alphabets of size polynomial in $n$.
  This improves the state-of-the-art
  $O(n \log h)$-time algorithms by Funakoshi et al. [PSC 2019],
  where $h$ is the height of $\trie$.
  Using our new algorithms, the eertree
  with suffix links for a given trie $\trie$
  can readily be obtained in $O(n)$ time.
  Further, our trie-based $O(n)$-space data structure allows us to report
  all distinct palindromes and maximal palindromes
  in a query string represented in the trie $\trie$,
  in output optimal time.
  This is an improvement over an existing (na\"ive) solution that precomputes
  and stores all distinct palindromes and maximal palindromes
  for each and every string in the trie $\trie$ separately,
  using a total $O(n^2)$ preprocessing time and space,
  and reports them in output optimal time upon query.
\end{abstract}

\section{Introduction}

A string $p$ is called a \emph{palindrome} if $p$ reads the same forward and backward, namely, $p = \rev{p}$ where $\rev{p}$ denotes the reversed string of $p$. Finding palindromes in a given string is a classical problem in Theoretical Computer Science, with motivations and possible applications in Bioinformatics~\cite{gusfield97:_algor_strin_trees_sequen,WKSung_book}.
A substring palindrome $p = S[i..j]$ in a string $S$ is called a \emph{maximal palindrome} if $S[i-1..j+1]$ is not a palindrome, $i = 1$, or $j = |S|$. Since any substring palindrome in $S$ shares the same center with a unique maximal palindrome, one can essentially obtain all substring palindromes in string $S$ by computing its maximal palindromes. 

There are two well-known algorithms that compute all maximal palindromes in a given string $S$ of length $m$. 
The algorithm of Manacher~\cite{Manacher75} finds all $2m-1$ maximal palindromes in $S$ in $O(m)$ time and space. His algorithm scans an input string from left to right, and works on a general (unordered) alphabet.
The algorithm of Gusfield~\cite{gusfield97:_algor_strin_trees_sequen} consists in two steps: In the preprocessing step, it builds the suffix tree~\cite{Weiner73} of the concatenated string $S\$\rev{S}\#$, where $\$$ and $\#$ are unique symbols not occurring inside $S$, and then enhances the suffix tree with the lowest common ancestor (LCA) data structure~\cite{DBLP:conf/latin/BenderF00} that allows for LCA queries between two nodes in $O(1)$ time after linear-time preprocessing. Then, the algorithm finds all maximal palindromes by applying $2m-1$ \emph{outward} longest common extension (outward LCE) queries in $S$ via the LCA data structure on the suffix tree. Overall, the Gusfield algorithm works in $O(m)$ time and space in the case of \emph{linearly-sortable alphabets}, including constant-size alphabets and integer alphabets of size polynomial in $m$.

While maximal palindromes tell us all \emph{occurrences} of palindromes in a string $S$, the other important direction of research is the \emph{vocabularies} of palindromes in $S$. The latter is called the \emph{distinct palindromes} problem, where the task is to compute the set of all distinct palindromes in $S$. It is known that any string of length $m$ can contain at most $m+1$ distinct palindromes (including the empty string)~\cite{DBLP:journals/tcs/DroubayJP01}.
The algorithm of Groult et al.~\cite{DBLP:journals/ipl/GroultPR10} finds all distinct palindromes in $O(m)$ time and space for linearly-sortable alphabets.

We consider extended versions of the above problems, in which the input is a \emph{trie} (i.e. an edge-labeled rooted tree). Tries are a natural generalization of strings at least in two meanings: A trie can be seen as a generalized string with branches; A trie is a compact representation of a set of strings. 
Funakoshi et al.~\cite{FunakoshiNIBT19} showed that, for any trie $\trie$ with $n$ edges and $l$ leaves, the number of maximal palindromes in $\trie$ is exactly $2n - l$ and the number of distinct palindromes in $\trie$ is at most $n+1$.
In addition, they showed how to find all maximal palindromes in a given trie $\trie$ in $O(n \log h)$ time with $O(h)$ space, where $h$ is the height of $\trie$. This algorithm is based on the periodicity of palindromes, and works on general alphabets.
They also showed how to find all distinct palindromes in a given trie in $O(n \log h)$ time with $O(n)$ space, in the case of linearly-sortable alphabets.

In this paper, we propose the first $O(n)$-time algorithms for computing the maximal palindromes and distinct palindromes in a given trie $\trie$ with $n$ edges, in the case of integer alphabets of polynomial size in $n$.
Our algorithm makes heavy use of the suffix tree of a backward trie~\cite{Kosaraju89a,breslauer98}, but its design is quite different from the aforementioned approach by Gusfield~\cite{gusfield97:_algor_strin_trees_sequen}.
Indeed, no $O(n)$-space $O(1)$-time outward LCE query data structures on tries are known to date (this is because the size of the suffix tree of a forward trie is $\Omega(n^2)$~\cite{Inenaga21}).

Technically speaking, our algorithm is most related to the suffix-tree based offline algorithm of Rubinchik and Shur (Proposition 4.10 in~\cite{RubinchikS18}) that builds the \emph{eertree}, which is a trie-based data structure storing all distinct palindromes in a given string $S$.
In the preprocessing phase of their method, the Manacher algorithm is used to compute all maximal palindromes in $S$. Our first finding in this paper is that this preprocessing phase can indeed be omitted, and instead one can use the suffix links of the leaves to check the maximality of given substring palindromes. This is helpful in our scenario since the application of the Manacher algorithm to a trie takes $O(n \log h)$ time~\cite{FunakoshiNIBT19}. The new suffix-link-based method can simultaneously compute all distinct palindromes (or alternatively the eertree) together with all maximal palindromes, in $O(m)$ time and space, for a given string $S$ of length $m$. However, the time analysis of the suffix-link-based method is due to the fact that each leaf of the suffix tree of a string has exactly one in-coming suffix link (except for the leaf representing the whole string, which has no in-coming suffix link), and the cost of checking an in-coming suffix link to a leaf can be charged to either a unique distinct palindrome or a unique occurrence of a maximal palindrome in $S$. Unfortunately, in our trie case, there can be at most $\sigma$ in-coming suffix links to a single leaf of the suffix tree for a trie, where $\sigma$ is the alphabet size. Since we need to check whether the palindrome in question can be extended with one of at most $\sigma$ candidate characters,
na\"ive suffix-link-based methods would require $\Omega(n \log \sigma)$ time for trie inputs, which is prohibitive for large alphabets. Still, we show an $O(n)$-time suffix-link-based method for computing all distinct/maximal palindromes in the input trie.

We also present how our trie-based $O(n)$-size data structure for storing all maximal/distinct palindromes in the input trie $\trie$ can be used for reporting all maximal/distinct palindromes in the strings stored in $\trie$, in \emph{output optimal} time, without expanding the trie to the plain strings or extracting a string of interest from the trie.
Note that the total length of the strings stored in $\trie$ can be as large as $O(n^2)$.
 \section{Preliminaries}\label{sec:preliminaries}

\subsection{Strings and palindromes}

Let $\Sigma$ be the {\em alphabet}.
An element of $\Sigma^*$ is called a {\em string}.
The length of a string $S$ is denoted by $|S|$.
The empty string $\varepsilon$ is a string of length 0,
namely, $|\varepsilon| = 0$.
For any non-negative integer $k$,
let $\Sigma^k$ denote the set of strings of length $k$.
For a string $S = xyz$, $x$, $y$ and $z$ are called
a \emph{prefix}, \emph{substring}, and \emph{suffix} of $S$, respectively.
A prefix (resp. suffix) of $T$ of length less than $|S|$
is called a \emph{proper prefix} (resp. \emph{proper suffix}) of $S$.

For a string $S$ and an integer $1 \leq i \leq |S|$,
$S[i]$ denotes the $i$th character of $S$,
and for two integers $1 \leq i \leq j \leq |S|$,
$S[i..j]$ denotes the substring of $S$
that begins at position $i$ and ends at position $j$.
For convenience, let $S[i..j] = \varepsilon$ when $i > j$.

Let $\rev{S}$ denote the reversed string of $S$,
i.e., $\rev{S} = S[|S|] \cdots S[1]$.
A string $S$ is called a \emph{palindrome} if $S = \rev{S}$.
We remark that the empty string $\varepsilon$ is also
considered to be a palindrome.
Let $\mathbb{P}$ denote the set of all palindromes over $\Sigma$.

For any substring palindrome $S[i..j]$ in $S$,
$\frac{i+j}{2}$ is called its \emph{center}.
A substring palindrome $S[i..j]$
is said to be a \emph{maximal palindrome} in $S$
if
(a) $S[i-1..j+1]$ is not a palindrome (or equivalently $S[i-1] \neq S[j+1]$),
(b) $i = 1$, or
(c) $j = |S|$.
It is clear that for each $c = 1, 1.5, \ldots, |S|$,
there is a one-to-one correspondence between
$c = \frac{i+j}{2}$ and the maximal palindrome $S[i..j]$. 
Thus, there are exactly $2|S|-1$ maximal palindromes in string $S$.

\subsection{Tries and palindromes}

A {\em trie} $\trie = (V, E)$ is a rooted tree
where each edge in $E$ is labeled by a single character from $\Sigma$
and the out-going edges of each node are labeled by mutually distinct characters.
For any non-root node $\mathbf{u}$ in $\trie$,
let $\parent(\mathbf{u})$ denote the parent of $\mathbf{u}$.
For any node $\mathbf{v}$ in $\trie$,
let $\childchar(\mathbf{v})$ denote the set of out-going edge labels of $\mathbf{v}$.
Let $\height(\trie)$ denote the height of $\trie$.
For convenience, we read the path labels in the input trie $\trie$
in the leaf-to-root direction.
For any path $(\mathbf{u}, \mathbf{v})$,
we denote by $\str(\mathbf{u}, \mathbf{v})$ the path label
from $\mathbf{u}$ to $\mathbf{v}$.
The set $\Substr(\trie)$ of (reversed) \emph{substrings} in $\trie$
is $\Substr(\trie) = \{\str(\mathbf{u}, \mathbf{v}) \mid \mathbf{u},\mathbf{v} \in V, \mbox{$\mathbf{u}$ is a descendant of $\mathbf{v}$}\}$.
Note that $\mathbf{v}$ itself is a descendant of $\mathbf{v}$, i.e., the empty string $\varepsilon = \str(\mathbf{v}, \mathbf{v})$ is an element of $\Substr(\trie)$.
For a string $w$ and a node $\mathbf{u}$ in $\trie$,
the pair $(\mathbf{u}, |w|)$ is called an occurrence of $w$ in $\trie$
if $w$ is a prefix of the substring starting at $\mathbf{u}$ in $\trie$.
Such representations of occurrences allow us
to retrieve the substring $w$ in $O(|w|)$ time
by taking the unique path from $\mathbf{u}$ towards the root.
Also, we enhance the input trie $\trie$ with a \emph{level ancestor} data structure~\cite{bender04:_level_ances_probl} so that any character in a given path $(\mathbf{u}, \mathbf{v})$ can be accessed in $O(1)$ time, after linear-time preprocessing.

We can generalize the notions of maximal palindromes
and distinct palindromes to tries, as follows.
An occurrence of a substring palindrome $p$ which starts at node $\mathbf{u}$ and ends at node $\mathbf{v}$
is called a \emph{maximal palindrome} in $\trie$
if
(a) $\mathbf{u}$ is not a leaf and $\str(\mathbf{u'}, \parent(\mathbf{v}))$ is not a palindrome with \emph{any} child $\mathbf{u'}$ of $\mathbf{u}$,
(b) $\mathbf{u}$ is a leaf, or
(c) $\mathbf{v}$ is the root $\Root$.
Let $\MPal(\trie)$ be the set of all maximal palindromes in $\trie$.
For each $1 \leq k \leq \height(\trie)$,
let $\MPal_k(\trie) = \{(\mathbf{v}, k) \in \MPal(\trie)\}$,
and let $\MPal_0(\trie) = \{\varepsilon\}$.
Namely $\bigcup_{0 \leq k \leq \height(\trie)}\MPal_k(\trie) = \MPal(\trie)$.

For any trie $\trie$,
let $\DPal(\trie)$ be the set of all substring palindromes in $\trie$,
namely $\DPal(\trie) = \Substr(\trie) \cap \mathbb{P}$.
We call the elements of $\DPal(\trie)$ as \emph{distinct palindromes}
in $\trie$.
For each $0 \leq k \leq \height(\trie)$,
let $\DPal_k(\trie) = \DPal(\trie) \cap \Sigma^k$.
Namely $\bigcup_{0 \leq k \leq \height(\trie)}\DPal_k(\trie) = \DPal(\trie)$.
See Figure~\ref{fig:trie} for an example of trie $\trie$ and its substring palindromes.

\begin{theorem}[\cite{FunakoshiNIBT19}] \label{lem:num_pal_trie}
  For any trie $\trie$ with $n$ edges and $l$ leaves,
  $|\MPal(\trie)| = 2n - l$ and $|\DPal(\trie)| \leq n+1$.
\end{theorem}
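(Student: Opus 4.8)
The plan is to prove the two bounds separately, each by generalizing the classical string counts ($2m-1$ maximal palindromes and at most $m+1$ distinct palindromes in a string of length $m$) while accounting for the branching of $\trie$. For $|\MPal(\trie)| = 2n-l$, I would count maximal palindromes by their centers. Along a palindromic path $\str(\mathbf{u},\mathbf{v})$ with $\mathbf{u}$ a descendant of $\mathbf{v}$, the center lies either at the midpoint of the middle edge (odd length) or at the middle node (even length). The key structural fact is that a center determines its maximal palindrome uniquely: the upward extension from $\mathbf{v}$ is forced since each node has a single parent, and the downward extension from $\mathbf{u}$ is also forced, because the palindrome condition prescribes the next character and, by the trie property, at most one child of $\mathbf{u}$ carries any given label. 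Hence distinct maximal palindromes have distinct centers. I would then establish the converse correspondence: every edge is the center of exactly one odd maximal palindrome, since its single-character label is a palindrome that extends maximally in a unique way; and every internal node (neither the root nor a leaf) is the center of exactly one even maximal palindrome, obtained by extending the empty palindrome symmetrically as far as possible. Since an even palindrome of positive length must be centered at a node possessing both a parent edge and a child edge on its path, its center is necessarily internal, so the correspondence is exact. Counting then gives $n$ edges plus $(n+1)-1-l = n-l$ internal nodes, for a total of $2n-l$.

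For $|\DPal(\trie)| \le n+1$, I would adapt the argument of Droubay, Justin and Pirillo that each new position of a string introduces at most one new distinct palindrome. Process the nodes of $\trie$ in order of nondecreasing depth, so that every node is handled after its parent, and observe that inserting a node $\mathbf{u}$ creates exactly the new substrings $\str(\mathbf{u},\mathbf{v})$ that begin at $\mathbf{u}$ and run upward to an ancestor $\mathbf{v}$. Let $P$ be the longest palindrome among these, with top endpoint $\mathbf{w}$. Any shorter palindrome $Q$ starting at $\mathbf{u}$ is a prefix of $P$; since $P$ is a palindrome, $Q$ also occurs as the length-$|Q|$ suffix of $P$, namely as $\str(\mathbf{u}'',\mathbf{w})$ for a node $\mathbf{u}''$ lying strictly between $\mathbf{u}$ and $\mathbf{w}$. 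As $\mathbf{u}''$ is an ancestor of $\mathbf{u}$, it was processed earlier, so $Q$ had already appeared and is therefore not new. Thus only $P$ can be newly occurring, so each of the $n$ non-root nodes contributes at most one new distinct palindrome; together with the empty palindrome contributed by the root this yields $|\DPal(\trie)| \le n+1$.

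The genuinely trie-specific step, and the one I expect to be the main obstacle, is the uniqueness of the downward mirrored path in the maximal-palindrome count: on a string both sides of a center are automatically determined, whereas on $\trie$ the downward direction a priori branches, and one must invoke the distinctness of out-going edge labels to argue that the palindrome condition still pins down a single continuation. The remaining delicate point is the bookkeeping of the empty palindrome, which must be charged to internal nodes to produce the $-l$ correction rather than to the root or to the leaves; this exactly mirrors the fact that a string has no palindrome center lying before its first character or after its last character.
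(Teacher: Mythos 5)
Your proof is correct, but note that this paper does not actually prove the statement: Theorem~\ref{lem:num_pal_trie} is imported verbatim from Funakoshi et al.~\cite{FunakoshiNIBT19}, so there is no in-paper proof to compare against. What you have written is essentially a reconstruction of the arguments in that cited source: the center-based bijection for $|\MPal(\trie)| = 2n-l$ (each of the $n$ edges hosts the unique odd-centered maximal palindrome, each of the $n-l$ non-root internal nodes hosts the unique even-centered one, with uniqueness of the downward extension following from the distinctness of out-going edge labels), and the Droubay--Justin--Pirillo-style charging argument for $|\DPal(\trie)| \leq n+1$, processing nodes by nondecreasing depth so that any proper prefix palindrome of the longest palindrome starting at $\mathbf{u}$ reoccurs as a suffix with a strictly shallower bottom endpoint and hence is not new. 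Both halves are sound, and you correctly identified the two genuinely trie-specific points (forced downward continuation, and charging empty palindromes to internal nodes only).

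One definitional wrinkle worth flagging: read literally, conditions (b) and (c) of this paper's definition of maximal palindromes make the \emph{empty} occurrences at every leaf and at the root maximal, which would inflate the count to $2n+1$ rather than $2n-l$. The formula holds only under the center-based convention you adopt --- empty maximal palindromes are counted at internal non-root nodes (mirroring even string centers with mismatched neighbors) but not at leaves or the root (mirroring the absence of centers before position $1$ and after position $|S|$). Your closing remark shows you saw this; it is the intended reading in~\cite{FunakoshiNIBT19}, and your proof is consistent with it. Two very minor points of hygiene: in the distinct-palindromes step, the node $\mathbf{u}''$ coincides with $\mathbf{w}$ when $Q = \varepsilon$, so the phrase ``strictly between'' needs the empty palindrome to be set aside (it is, since you charge $\varepsilon$ to the root); and for the chain argument behind uniqueness one should note that all palindromes sharing a center are totally ordered and linked by single outward extension steps, which is what makes ``maximal'' equivalent to ``longest at its center.''
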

\begin{figure}[htb]
  \centering
  \includegraphics[width=0.5\linewidth]{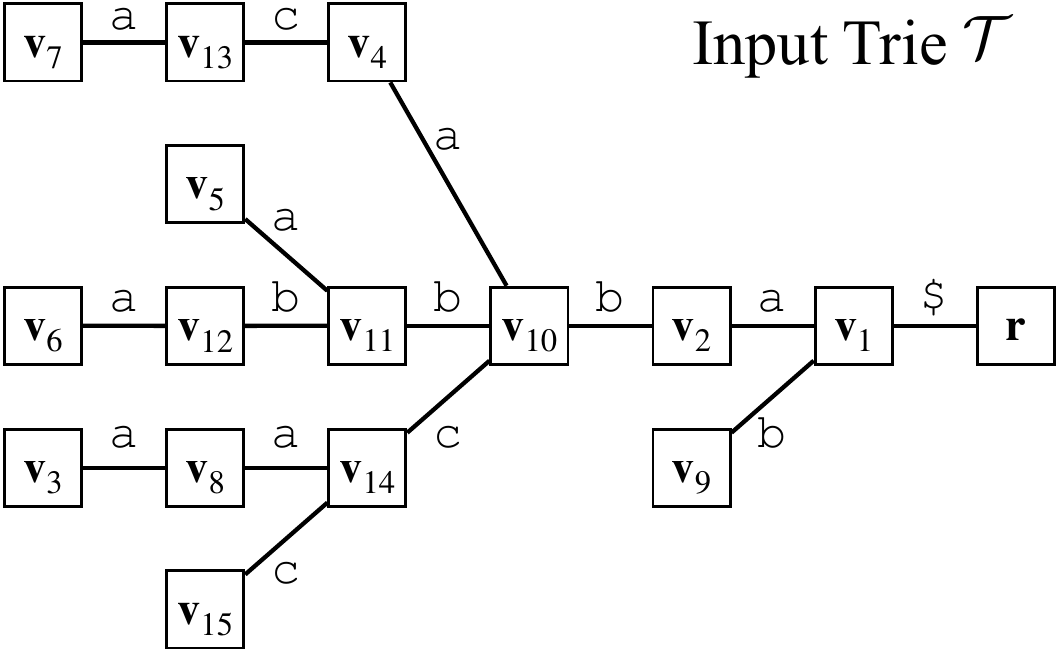}
  \caption{
    An example for an input trie $\trie$ with $n = 15$ edges, in which the path labels are read from the leaves to the root $\mathbf{r}$.
    The height of $\trie$ is $\height(\trie) = 6$.
    Let us focus on the node $\mathbf{v}_{14}$.
    Then $\parent(\mathbf{v}_{14}) = \mathbf{v}_{10}$,
    $\childchar(\mathbf{v}_{14}) = \{\mathtt{a}, \mathtt{c}\}$.
    Also, $\str(\mathbf{v}_{14}, \mathbf{v}_{1}) = \mathtt{cba}$.
    The occurrence $(\mathbf{v}_{10}, 1)$ of palindrome $\mathtt{b}$ is not a maximal palindrome
    since it can be extended to a longer palindrome $\str(\mathbf{v}_4, \mathbf{v}_1) = \mathtt{aba}$.
    This occurrence $(\mathbf{v}_4, 3)$ of $\mathtt{aba}$ is a maximal palindrome.
    We have $\DPal(\trie) = \{\varepsilon, \mathtt{a}, \mathtt{b}, \mathtt{c}, \mathtt{aa}, \mathtt{bb}, \mathtt{cc}, \mathtt{aba}, \mathtt{bbb}, \mathtt{aca}, \mathtt{abba}, \mathtt{abbba}\}$ except for the special character $\$$.
  }
  \label{fig:trie}
\end{figure}

\subsection{Suffix tree and eertree of trie}

\noindent \textbf{\textsf{Suffix tree of a trie}.}
For convenience, we assume that 
the root $\Root$ of the input trie $\trie$
has only a single child and its incoming edge from $\Root$
is labeled by a special character $\$$ that does not occur elsewhere in $\trie$.

For any node $\mathbf{v}$ in $\trie$, let $\suf(\mathbf{v}) = \str(\mathbf{v}, \Root)$.
The set $\Suffix(\trie)$ of (reversed) suffixes of a given trie $\trie = (V, E)$ is $\Suffix(\trie) = \{\suf(\mathbf{v}) \mid \mathbf{v} \in V\}$.

The \emph{suffix tree} of $\trie$, denoted $\STree(\trie)$,
is a compacted trie such that
\begin{itemize}
  \item Each edge of $\STree(\trie)$ is labeled by a non-empty reversed substring of $\trie$;
  \item The labels of the edges from each node of $\STree(\trie)$ begin with mutually distinct characters; 
  \item There is a one-to-one correspondence between the leaves of $\STree(\trie)$ and the non-root nodes in $\trie$ such that every suffix in $\Suffix(\trie)$ is  represented by a unique leaf in $\STree(\trie)$.
\end{itemize}

For any reversed substring $w$ in the input trie $\trie$,
the \emph{locus} for $w$ in the respective suffix tree $\STree(\trie)$
is the ending point of the path which spells out $w$
from the root of $\STree(\trie)$.
Note that the locus is either on a node or in the middle of an edge.
The loci ending in the middle of edges are called
\emph{implicit nodes},
while the loci ending at nodes are called \emph{explicit nodes}.
For any implicit or explicit node $v$ of $\STree(\trie)$,
let $\str(v)$ denote the path string from the root of $\STree(\trie)$ to $v$.
Then, we say that the node $v$ \emph{represents} the substring $\str(v)$.

By the third property of $\STree(\trie)$,
there are exactly $n$ leaves in $\STree(\trie)$.
Since each of the internal explicit nodes has at least two children,
there are at most $n-1$ internal nodes in $\STree(\trie)$.
Thus there are at most $2n-1$ nodes and $2n-2$ edges in $\STree(\trie)$.
Each edge label $x$ of $\STree(\trie)$
is represented by a pair $(\mathbf{u}, \mathbf{v})$ of nodes in $\trie$
such that $x = \str(\mathbf{u}, \mathbf{v})$.
This allows us to represent $\STree(\trie)$ in $O(n)$ space.
See Figure~\ref{fig:suffix_tree} for an example $\STree(\trie)$.

In what follows, we use a common assumption that our alphabet is an integer alphabet of polynomial size in $n$, where 
$n>1$
is the number of edges (and thus the number of characters) in the input trie $\trie$.
We then sort the characters appearing in the input trie $\trie$
in $O(n)$ time by radix-sort,
and replace each character appearing in the input trie $\trie$
with its lexicographical rank.
This ensures that we can work on the integer alphabet $\Sigma = [1..\sigma]$, where $\sigma \leq n$ denotes the number of distinct characters in $\trie$.
Then $\STree(\trie)$ can be built in linear time:

\begin{theorem}[\cite{Shibuya03}] \label{theo:stree_linear_integer}
  For any trie $\trie$ with $n$ edges
  where edge labels are drawn from an integer alphabet $[1..n]$,
  $\STree(\trie)$ can be built in $O(n)$ time and working space.
\end{theorem}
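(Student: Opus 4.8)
The plan is to adapt Farach's divide-and-conquer suffix-tree construction to the tree setting. First I would observe that a direct generalization of the incremental Weiner/Ukkonen approach (essentially Breslauer's construction) already builds $\STree(\trie)$, but it spends $O(\log\sigma)$ time per branching decision and hence $O(n\log\sigma)$ overall; the whole point of the theorem is to shave this factor for integer alphabets. The structural feature that makes a recursive approach viable is that every suffix $\suf(\mathbf{v}) = \str(\mathbf{v},\Root)$ terminates at the common endpoint $\Root$, so positions along suffixes can be consistently aligned by their depth in $\trie$. This lets me assign a global parity to each node according to the parity of its depth, which is exactly the ingredient Farach's recursion needs. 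Throughout, since $\sum_{\mathbf{v}}|\suf(\mathbf{v})|$ can be $\Omega(n^2)$, every step must operate on $O(n)$-size structures only; I would rely on the level-ancestor enhancement of $\trie$ to read any single character along a path in $O(1)$ time.

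The recursion proceeds in three phases. In the first phase I build the tree of the \emph{odd} suffixes. Contracting $\trie$ by merging consecutive edges in pairs according to depth parity produces a trie $\trie'$ with at most $n/2$ edges whose labels are character pairs; radix-sorting these pairs (each bounded by $n^2$, hence sortable in $O(n)$ time by a two-pass radix sort) relabels them over an integer alphabet of size at most $n/2$. I then recurse on $\trie'$ to obtain $\STree(\trie')$, and expand its edge labels back to single characters to recover the compacted tree of the odd suffixes of $\trie$, together with their sorted order and pairwise longest-common-prefix (LCP) information.

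In the second phase I derive the \emph{even} suffixes: each even suffix is one edge label followed by an odd suffix whose rank is already known, so a stable radix sort on (first character, odd-suffix rank) pairs yields the sorted order and LCP data for the even suffixes. Because of branching the even suffixes form a forest keyed by the parent node rather than a single chain, but a bucket sort handles this uniformly. The third and decisive phase merges the odd and even trees into $\STree(\trie)$ by a synchronized top-down traversal, resolving each ordering conflict between an odd branch and an even branch with a constant number of character/LCP comparisons supported by the precomputed structures.

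For the complexity, each recursion level performs a constant number of radix sorts with keys in $[1..\poly(n)]$ together with a linear-size merge, doing $O(n_i)$ work where $n_i \le n/2^i$ is the edge count at level $i$; summing the geometric series gives $O(n)$ total time, and the intermediate structures never exceed $O(n)$ space. I expect the main obstacle to be the merge phase: unlike the string case, a node of $\trie$ may have many children, so the odd and even trees branch in ways that must be reconciled without ever comparing full suffixes. The crux is to charge every comparison to a distinct edge of the resulting $\STree(\trie)$ (which has only $O(n)$ edges) by using suffix links and the inherited LCP values as shortcuts, thereby showing that branching does not inflate the per-level cost beyond linear. Establishing this amortization carefully, and verifying that the parity-based contraction stays well defined at branching nodes, is where the real work lies.
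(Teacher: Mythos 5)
First, note the paper does not prove this statement at all: Theorem~\ref{theo:stree_linear_integer} is imported wholesale from Shibuya~\cite{Shibuya03}, whose algorithm is exactly the Farach-style odd/even recursion you sketch (see also~\cite{FarachFM00}), so your architecture is the historically correct one. However, your proposal contains one concretely false step. The claim that depth-parity contraction ``produces a trie $\trie'$ with at most $n/2$ edges'' fails, because in a trie a parent edge is consumed once \emph{per child} when pairing consecutive edges: take the broom consisting of the root $\Root$, a single child $\mathbf{u}$ at depth $1$, and $n-1$ children of $\mathbf{u}$ at depth $2$. Every depth-$2$ edge pairs with the \emph{same} depth-$1$ edge, so the contracted trie has $n-1$ edges, not $n/2$. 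Dually, a star (root with $n$ children) puts all $n$ non-root nodes in the odd class. Since the parity classes can be arbitrarily unbalanced, your recurrence degenerates to $T(n) = T(n - O(1)) + O(n)$, i.e., quadratic time. The repair is to recurse on whichever parity class is \emph{smaller} (hence of size at most $n/2$): because parities alternate along every node-to-root path, the suffixes of either class determine those of the other via your phase-two radix sort (an odd-depth suffix is one character followed by an even-depth suffix and vice versa, the empty suffix at $\Root$ counting as even), which restores $T(n) \leq T(n/2) + O(n) = O(n)$. Your contraction is, by contrast, perfectly well defined at branching nodes --- siblings in $\trie'$ inherit distinct pair labels, with distinct first components if they share the intermediate node and distinct second components otherwise --- so the issue you flagged there is benign; the real defect is the size bound.

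The second gap is the merge phase, which you defer but which is the decisive step. Resolving an odd/even ordering conflict with ``a constant number of character/LCP comparisons'' presupposes LCP values \emph{between} odd and even suffixes, and these are precisely what phases one and two do not provide (they yield LCPs within each class only); computing cross-class LCPs directly is as hard as the original problem. Farach's string algorithm circumvents this by \emph{overmerging} --- coupling the two trees using only first characters and edge lengths, possibly identifying loci that represent different strings --- and then running a separate correction stage that recovers the true merge depths via suffix-link machinery; it is this overmerge-and-refine scheme, not a direct charging argument, that must be generalized to nodes of high degree, and doing so is the technical core of~\cite{Shibuya03}. You candidly mark this as ``where the real work lies,'' but as written the proposal asserts rather than establishes it. In summary: right blueprint, matching the cited source's actual strategy, but with a broken halving claim (fixable by adaptive parity choice) and the central merge argument missing.
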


We remark that the algorithm of Theorem~\ref{theo:stree_linear_integer}
builds the \emph{edge-sorted} suffix tree for a given trie $\trie$,
in which the out-going edges of each node are sorted in lexicographical order.
Thus, we can naturally assume that the leaves in $\STree(\trie)$ are arranged in the lexicographical order from left to right.
For each $1 \leq i \leq n$,
let $\ell_i$ denote the $i$th leaf in $\STree(\trie)$
which represents the lexicographically $i$th suffix in $\Suffix(\trie)$.
For each $1 \leq i \leq n$,
let $\mathbf{v}_i$ denote the node in the input trie $\trie$
that corresponds to the leaf $\ell_i$,
namely $\suf(\mathbf{v}_i) = \str(\ell_i)$ (see Figure~\ref{fig:suffix_tree} for examples).

The \emph{suffix link} of any non-root explicit node $v$ in $\STree(\trie)$
points to the explicit node $u$ representing the longest proper suffix of $\str(v)$, i.e. $\str(u) = \str(v)[2..|\str(v)|]$,
and we denote this by $\slink(v) = u$.
Note that the destination explicit node $u$ always exists.
Also, by the definition, for each leaf $\ell_i$ in the suffix tree,
  $\slink(\ell_i)=\ell_j$ if and only if $\parent(\mathbf{v}_i) = \mathbf{v}_j$.
The first character $\str(v)[1]$ that is dropped when moving from $v$ to $u$
is called the \emph{suffix link label} of $\slink(v)$.
\begin{figure}[htb]
  \centering
  \includegraphics[width=0.7\linewidth]{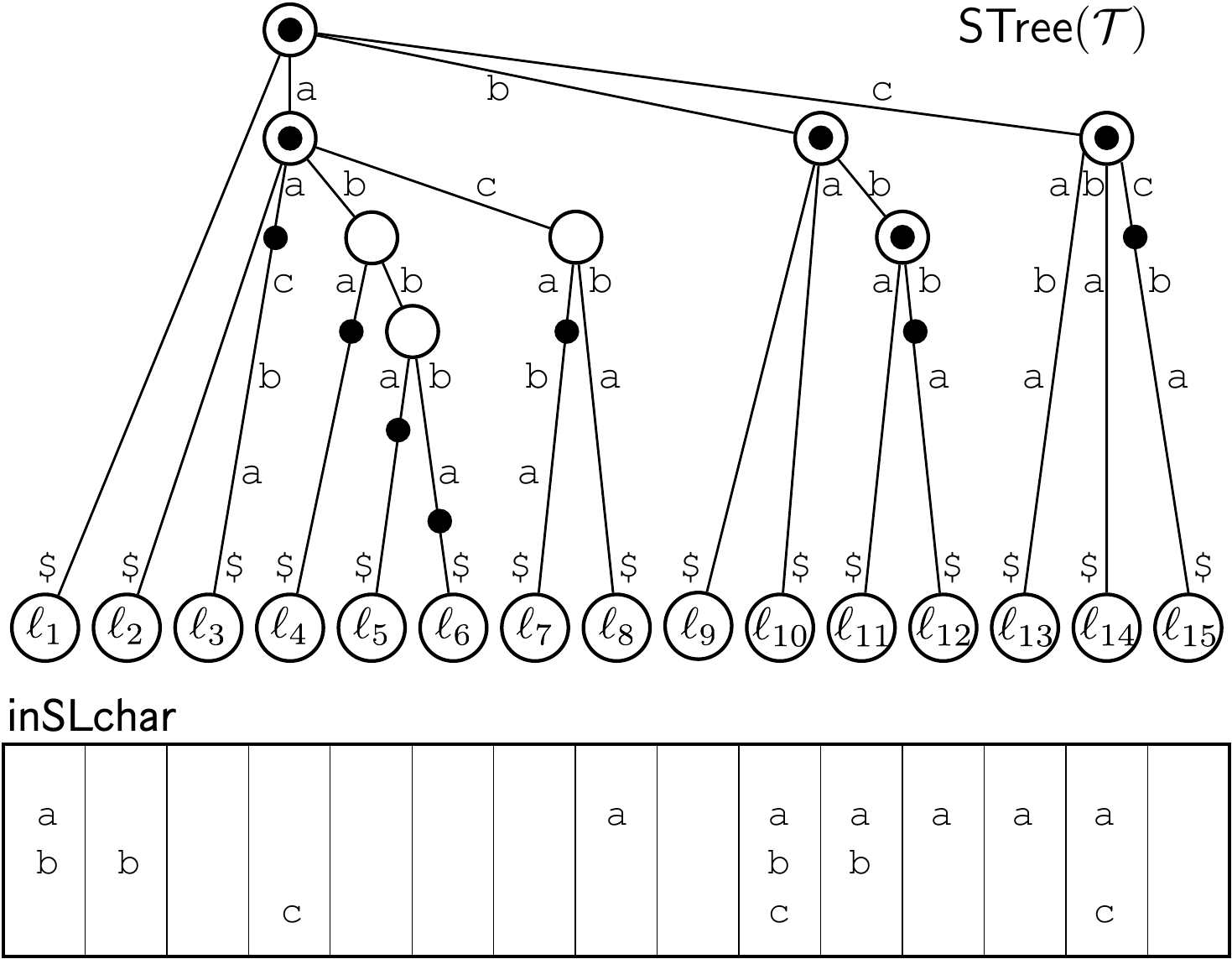}
  \caption{
    The suffix tree $\STree(\trie)$ of the input trie $\trie$ of Figure~\ref{fig:trie}.
    The explicit nodes are depicted by the white circles,
    while the palindromic nodes representing the elements of $\DPal(\trie)$ are depicted by the black circles.
    The sets $\inSL(\ell_i)$ for all $\ell_i$ are depicted below the leaves.
    Each $\inSL(\ell_i)$ stores the in-coming suffix link labels of leaf $\ell_i$.
    The suffix links are omitted in this figure, however, we can verify this $\inSL$ is correct by looking at trie $\trie$
    since $\inSL(\ell_i) = \childchar(\mathbf{v}_i)$ holds.
  }
  \label{fig:suffix_tree}
\end{figure}

In our algorithms to follow, we will use only
the suffix links of the leaves of $\STree(\trie)$.
Since every suffix of $\trie$ ends with the unique end-marker $\$$,
the suffix link of a leaf points to another leaf,
except for the suffix link of the leaf representing $\$$
which points to the root.
For any leaf $\ell_i$ in $\STree(\trie)$,
let $\inSL(\ell_i)$ denote the set of in-coming suffix link labels.
Notice that $\inSL(\ell_i) = \childchar(\mathbf{v}_i)$ holds
since $\slink(\ell_j)=\ell_i$ iff $\parent(\mathbf{v}_j) = \mathbf{v}_i$.
See Figure~\ref{fig:suffix_tree} for examples.
For any node $v$ in $\STree(\trie)$,
let $\subtree(v)$ denote the subtree rooted at $v$, and
let $\leaves(v)$ denote the set of leaves in $\subtree(v)$.
Given $\STree(\trie)$,
it is easy to compute the suffix links of all leaves of $\STree(\trie)$
in $O(n)$ time.
Thus we can also compute $\inSL(\ell)$ for every leaf $\ell$ in $\STree(\trie)$ in a total of $O(n)$ time.

We extend the notion of $\childchar(\cdot)$ to the suffix tree so that for each node $v$ in $\STree(\trie)$, $\childchar(v)$ denotes the set of the first characters of the out-going edge labels of $v$.

\vspace*{0.5pc}
\noindent \textbf{\textsf{Eertree of a trie}.}
We can naturally generalize the \emph{eertree} (a.k.a. \emph{palindromic tree})
which represents the distinct palindromes of a single string~\cite{RubinchikS18} to
the case of a trie $\trie$, as follows:
The eertree for a trie $\trie$, denoted $\eertree(\trie)$,
is a pair of two rooted tries (the even-tree and odd-tree)
where the root of the even-tree represents $\varepsilon$
and the root of the odd-tree is an auxiliary node $\bot$.
Then, for each $p \in \DPal(\trie)$,
there is a path that spells out $p[\lfloor |p|/2 \rfloor+1.. |p|]$
from the root of the even-tree if $|p|$ is even,
or from the root of the odd-tree otherwise.
The \emph{suffix link} of a node representing a palindrome $p$ in $\eertree(\trie)$
points to the node that represents the longest palindrome $q$, which is a proper suffix of $p$ (the suffix links may bridge the nodes of the even-tree and the nodes of the odd-tree).
The suffix links of the two roots $\varepsilon$ and $\bot$
both point to $\bot$.
See Figure~\ref{fig:eertree}
for an example of the eertree for a trie.
\begin{figure}[htb]
  \centering
  \includegraphics[width=0.5\linewidth]{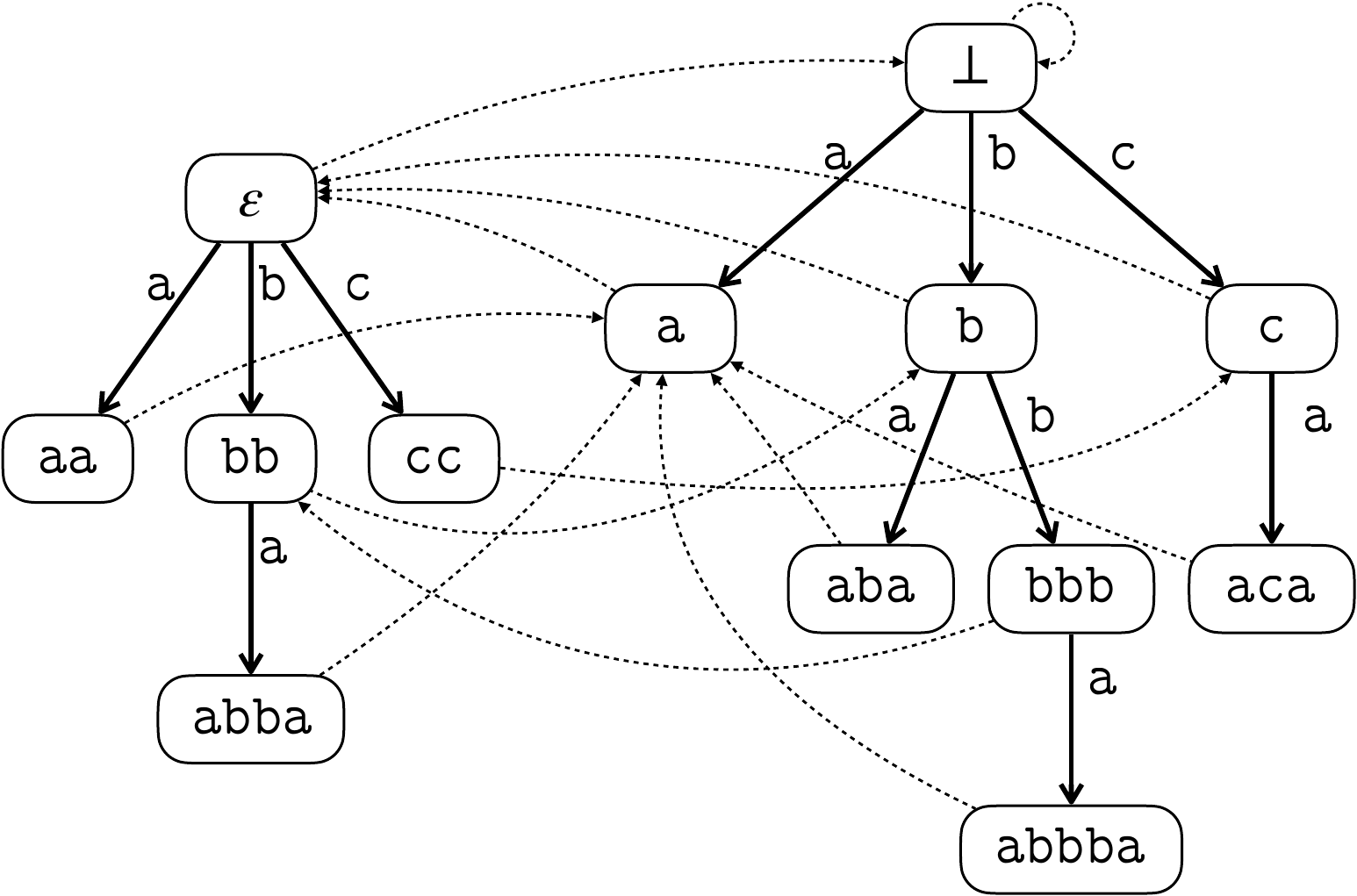}
  \caption{
    The palindromic tree $\eertree(\trie)$ for the input trie $\trie$ of Figure~\ref{fig:trie}, which represents
    $\DPal(\trie) = \{\varepsilon, \mathtt{a}, \mathtt{b}, \mathtt{c}, \mathtt{aa}, \mathtt{bb}, \mathtt{cc}, \mathtt{aba}, \mathtt{bbb}, \mathtt{aca}, \mathtt{abba}, \mathtt{abbba}\}$.
    The bold arcs represent the edges and the dotted arcs represent the suffix links.
  }
  \label{fig:eertree}
\end{figure}
 \section{Algorithm for computing \texorpdfstring{$\DPal(\trie)$}{DPal(T)} and \texorpdfstring{$\MPal(\trie)$}{MPal(T)}} \label{sec:algo}

This section presents our algorithm for computing
$\MPal(\trie)$ and $\DPal(\trie)$ for a given trie $\trie$.

\subsection{Overview of our algorithm}

We use $\STree(\trie)$ as our main tool for computing $\DPal(\trie)$ and $\MPal(\trie)$.
An (implicit or explicit) node $v$ on $\STree(\trie)$
such that $\str(v) \in \mathbb{P}$ is called a \emph{palindromic node}.
See also Figure~\ref{fig:suffix_tree}, in which the palindromic nodes
are depicted by the black circles.
By Theorem~\ref{lem:num_pal_trie},
there are exactly $|\DPal(\trie)| \in O(n)$ palindromic nodes in $\STree(\trie)$,
and thus, computing $\DPal(\trie)$ can be reduced to computing all palindromic nodes in $\STree(\trie)$.

We compute $\DPal(\trie)$ together with $\MPal(\trie)$ 
in increasing order of the lengths of the palindromes.
Namely, we visit all palindromic nodes in $\STree(\trie)$,
in the level-wise breadth first manner.
Also, for each found palindrome $p$ in increasing order of their length,
we check whether it has an occurrence as a maximal palindrome in $\trie$.
This can be done as follows:

\begin{itembox}[c]{Our strategy for computing $\DPal(\trie)$ and $\MPal(\trie)$}
\begin{itemize}

  \item \textbf{Base steps:} Start from the root $\Root$ of $\STree(\trie)$. Report $\varepsilon$ as a member of $\DPal_0(\trie)$, and report all single characters occurring in $\trie$, as members of $\DPal_1(\trie)$.

\item \textbf{Inductive step:} For each increasing $k = 0, 1, \ldots, \height(\trie)-1$:
 \begin{itemize}
  \item For each palindrome $p \in \DPal_k(\trie)$, let $v$ be the locus for $p$ in $\STree(\trie)$.
    \begin{itemize}
      \item For each character $c \in \childchar(v)$, perform the following:
      \begin{itemize}
\item If $cpc \in \Substr(\trie)$, then report this new palindrome $cpc$ of length $k+2$ as a new member of $\DPal_{k+2}(\trie)$.

      \item For each occurrence of $pc$ that is not preceded by $c$ in the reversed trie $\trie$, report this occurrence of $p$ as a new member of $\MPal_k(\trie)$.
\end{itemize}
    \end{itemize}
 \end{itemize}
 \end{itemize}
\end{itembox}

Given a palindrome $p$ of length $k$ together with the corresponding palindromic node $v$ in $\STree(\trie)$ and $c \in \childchar(v)$,
our task is to efficiently check whether or not $cpc \in \Substr(\trie)$,
and if so, then find the locus for $cpc$ in $\STree(\trie)$. 
In the sequel, for simplicity,
we identify the strings $p$ and $pc$
with their loci (implicit or explicit nodes) in $\STree(\trie)$.

For a palindrome $p$ and a character $c$,
$cpc \in \Substr(\trie)$ iff there exists a node $\mathbf{v}_i \in \trie$ such that $\suf(\mathbf{v}_i)[1..|p|+1] = pc$ and
$\mathbf{v}_i$ has a child that is reachable with character $c$.
Recalling that this is equivalent to that $c \in \inSL(\ell_i)$ with the suffix tree leaf $\ell_i$ that corresponds to $\mathbf{v}_i$, we obtain the two following observations (see also Figure~\ref{fig:obs}).

\begin{observation} \label{obs:distinct_pals_on_stree}
  Let $p$ be a locus in $\STree(\trie)$ such that $p \in \DPal_k(\trie)$,
  and let $c \in \childchar(p)$.
  Then, $cpc \in \DPal_{k+2}(\trie)$ 
  iff there exists a suffix tree leaf $\ell \in \leaves(pc)$ such that
  $c \in \inSL(\ell)$.
\end{observation}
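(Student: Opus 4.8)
The plan is to reduce the membership $cpc \in \DPal_{k+2}(\trie)$ to a purely structural condition on $\STree(\trie)$, exploiting that wrapping a palindrome with the same character on both sides again yields a palindrome. First I would note that since $p = \rev{p}$, we have $\rev{cpc} = c\,\rev{p}\,c = cpc$, so $cpc$ is automatically a palindrome. Consequently $cpc \in \DPal_{k+2}(\trie)$ holds if and only if $cpc \in \Substr(\trie)$: the palindromicity is free, and only the ``occurrence'' part must be verified. This turns the claim into the equivalence $cpc \in \Substr(\trie) \iff \exists\, \ell \in \leaves(pc)$ with $c \in \inSL(\ell)$, which is exactly the characterization stated just before the observation; the remaining work is to justify that characterization in suffix-tree language.

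Next I would unfold $cpc \in \Substr(\trie)$ via the fact that a string lies in $\Substr(\trie)$ precisely when it is a prefix of some reversed suffix $\suf(\mathbf{w})$, recalling that path labels are read leaf-to-root. Thus $cpc \in \Substr(\trie)$ iff there is a node $\mathbf{w}$ with $\suf(\mathbf{w})[1..k+2] = cpc$. The key step is to peel off the leading $c$: since $\suf(\mathbf{w})[1] = c$ is the label of the edge from $\mathbf{w}$ to $\parent(\mathbf{w})$, we have $\suf(\mathbf{w}) = c \cdot \suf(\mathbf{v}_i)$ where $\mathbf{v}_i = \parent(\mathbf{w})$. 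Hence $cpc$ is a prefix of $\suf(\mathbf{w})$ if and only if $pc$ is a prefix of $\suf(\mathbf{v}_i)$ and $\mathbf{w}$ is the child of $\mathbf{v}_i$ reached by character $c$; conversely, any node $\mathbf{v}_i$ admitting $pc$ as a prefix of $\suf(\mathbf{v}_i)$ and an out-going edge labeled $c$ gives back such a $\mathbf{w}$. I would spell out both directions of this bijection between occurrences of $cpc$ and their parent nodes $\mathbf{v}_i$; note $\mathbf{v}_i$ is necessarily a non-root node, since $\suf(\mathbf{v}_i)$ must be non-empty (it has $pc$, of length $k+1 \geq 1$, as a prefix), so $\mathbf{v}_i$ does correspond to a leaf of $\STree(\trie)$.

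Finally I would translate the two conditions on $\mathbf{v}_i$ into its corresponding leaf $\ell_i$ (the one with $\str(\ell_i) = \suf(\mathbf{v}_i)$). The condition ``$pc$ is a prefix of $\suf(\mathbf{v}_i)$'' says exactly that $\ell_i$ lies in the subtree below the locus $pc$, i.e.\ $\ell_i \in \leaves(pc)$; this locus is genuine because $c \in \childchar(p)$ guarantees $pc \in \Substr(\trie)$. The condition ``$\mathbf{v}_i$ has a child reachable by $c$'' says $c \in \childchar(\mathbf{v}_i)$, which by the identity $\inSL(\ell_i) = \childchar(\mathbf{v}_i)$ is the same as $c \in \inSL(\ell_i)$. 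Chaining these equivalences yields $cpc \in \Substr(\trie) \iff \exists\, \ell \in \leaves(pc)$ with $c \in \inSL(\ell)$, which combined with the first paragraph gives the statement. The only genuinely error-prone point — and hence the step I would be most careful about — is the bookkeeping of the leaf-to-root reading convention, so that peeling off the leading $c$ corresponds to moving to the parent in $\trie$ and, through the leaf correspondence, to an in-coming suffix link captured by $\inSL$; everything else is a direct application of the definitions of $\Substr(\trie)$, $\leaves(\cdot)$, and the identity $\inSL(\ell_i) = \childchar(\mathbf{v}_i)$.
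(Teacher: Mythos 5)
Your proposal is correct and follows essentially the same route as the paper: the paper justifies this observation in the paragraph immediately preceding it, via the same equivalence $cpc \in \Substr(\trie)$ iff some trie node $\mathbf{v}_i$ has $\suf(\mathbf{v}_i)[1..|p|+1] = pc$ and a child reachable by $c$, combined with the identity $\inSL(\ell_i) = \childchar(\mathbf{v}_i)$. You merely spell out the details the paper leaves implicit (palindromicity of $cpc$ being free, the leaf-to-root peeling of the leading $c$, and the non-root check for the leaf correspondence), all of which are handled correctly.
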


\begin{observation} \label{obs:maximal_pals_on_stree}
  Let $p$ be a locus in $\STree(\trie)$
  such that $p \in \DPal_k(\trie)$,
  and let $c \in \childchar(p)$.
  Then, $(\mathbf{v}_i, |p|) \in \MPal_k(\trie)$
  iff $\ell_i \in \leaves(pc)$ and $c \notin \inSL(\ell_i)$,
  where $\ell_i$ is the suffix tree leaf corresponding to $\mathbf{v}_i$.\end{observation}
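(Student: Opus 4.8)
The plan is to translate both suffix-tree conditions into statements about the local trie geometry around $\mathbf{v}_i$ and then read off maximality directly from the three-case definition. First I would unfold $\ell_i \in \leaves(pc)$: it says that $\str(\ell_i) = \suf(\mathbf{v}_i)$ has $pc$ as a prefix, so writing $\mathbf{w}$ for the ancestor of $\mathbf{v}_i$ at distance $k = |p|$, the upward path satisfies $\str(\mathbf{v}_i, \mathbf{w}) = p$ while the edge from $\mathbf{w}$ to $\parent(\mathbf{w})$ is labeled $c$; in particular $\mathbf{w} \neq \Root$, so $(\mathbf{v}_i, |p|)$ is genuinely an occurrence of $p$ whose forced one-step upward extension is $c$. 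Using the identity $\inSL(\ell_i) = \childchar(\mathbf{v}_i)$ recorded in the preliminaries, the second condition $c \notin \inSL(\ell_i)$ becomes the purely trie-local statement that $\mathbf{v}_i$ has no out-going edge labeled $c$, i.e.\ no child reachable by $c$.

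The key computation I would isolate next is the palindrome test for the two-sided extension. Since $p = \rev{p}$, for any character $a$ the string $a\,p\,c$ is a palindrome if and only if $a = c$ (compare first and last characters; the middle $p$ already matches). Concretely, for a child $\mathbf{u'}$ of $\mathbf{v}_i$ reached by an edge labeled $a$, the extended path satisfies $\str(\mathbf{u'}, \parent(\mathbf{w})) = a\,p\,c$ under the leaf-to-root convention, which is a palindrome exactly when $a = c$. Hence a palindromic extension to $cpc$ exists if and only if some out-going edge of $\mathbf{v}_i$ carries the label $c$, that is, if and only if $c \in \childchar(\mathbf{v}_i) = \inSL(\ell_i)$.

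With these two reductions in hand I would match against the definition of a maximal palindrome in $\trie$, taking $\mathbf{u} = \mathbf{v}_i$ and end node $\mathbf{w}$. Because $\mathbf{w} \neq \Root$, case~(c) never fires, so maximality is decided by cases~(a) and~(b). If $\mathbf{v}_i$ is a leaf, case~(b) makes the occurrence maximal, and simultaneously $\childchar(\mathbf{v}_i) = \emptyset$ forces $c \notin \inSL(\ell_i)$, so both sides agree. If $\mathbf{v}_i$ is internal, case~(b) is void and maximality is precisely case~(a): no child $\mathbf{u'}$ yields a palindrome $\str(\mathbf{u'}, \parent(\mathbf{w}))$, which by the previous paragraph is exactly $c \notin \inSL(\ell_i)$; conversely $c \in \inSL(\ell_i)$ produces the longer palindrome $cpc$ and destroys maximality. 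This gives the equivalence once $\ell_i \in \leaves(pc)$ is assumed. For the forward direction I would start from a maximal occurrence of $p$; since $p \neq \$$ it cannot end at $\Root$, so it has a well-defined upward character, and that character being $c$ is exactly the assertion $\ell_i \in \leaves(pc)$, after which the argument above supplies $c \notin \inSL(\ell_i)$.

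The step I expect to be the main obstacle is faithfully reducing the asymmetric three-case trie definition to the symmetric single-character test. One has to verify that the universally quantified clause in case~(a) --- ``$\str(\mathbf{u'}, \parent(\mathbf{w}))$ is not a palindrome for \emph{every} child $\mathbf{u'}$'' --- collapses to the single test $c \notin \childchar(\mathbf{v}_i)$, which is where the palindromicity of $p$ does all of the work, and to confirm that the boundary case~(b) is consistent with $\inSL(\ell_i) = \emptyset$ rather than an exception. Keeping the leaf-to-root reading convention straight, so that the downward edge contributes the first character $a$ and the upward edge the last character $c$ of the extension $a\,p\,c$, is the other place where an orientation slip would be easy to make.
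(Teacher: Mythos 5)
Your proof is correct and takes essentially the same route as the paper, which justifies this observation in a single line via the equivalence between $c \in \inSL(\ell_i)$ and $\mathbf{v}_i$ having a child reachable with $c$ (i.e., extendability of the occurrence of $p$ to $cpc$). You simply fill in the details the paper leaves implicit --- the test that $apc$ is a palindrome iff $a = c$, the identity $\inSL(\ell_i) = \childchar(\mathbf{v}_i)$, and the boundary cases ($\mathbf{v}_i$ a leaf, occurrences ending at $\Root$, and the per-$c$ reading of the forward direction) --- and you handle all of them correctly.
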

\begin{figure}[ht]
  \centering
  \includegraphics[width=0.7\linewidth]{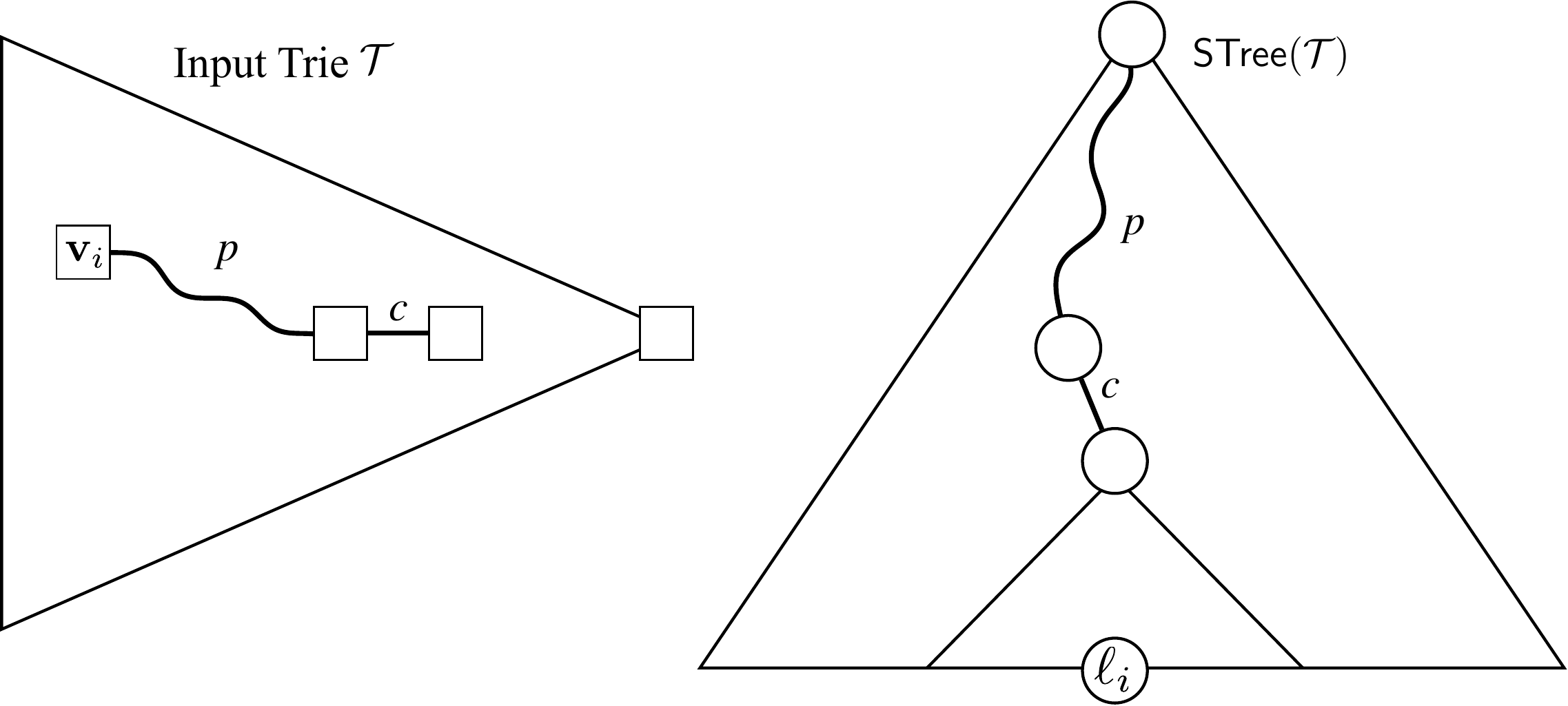}
  \caption{
    Illustration for Observations~\ref{obs:distinct_pals_on_stree} and \ref{obs:maximal_pals_on_stree}.
  } \label{fig:obs}
\end{figure}  

Following Observation~\ref{obs:distinct_pals_on_stree} and Observation~\ref{obs:maximal_pals_on_stree},
our task is, given a palindromic node $p$ in $\STree(\trie)$
and $c \in \childchar(p)$, to perform the following task in \emph{(amortized) constant time}.
\begin{enumerate}
\item[(1)] Check whether there exists a leaf $\ell$ under the locus for $pc$ such that $\inSL(\ell)$ contains character $c$, and if so, find it.
  Then, locate the locus for the extended palindrome $cpc$ in $\STree(\trie)$.
  
\item[(2)] Check whether there exists a leaf $\ell$ under the locus for $pc$ such that $\inSL(\ell)$ does not contain character $c$, and if so, retrieve all and only such leaves by skipping the other leaves which do not correspond to the output with respect to $pc$.
\end{enumerate}

In what follows, we show the details of our algorithm that achieves the above goal,
which consists of the preprocessing phase and the computing phase.

\subsection{Preprocessing phase}

We first enhance $\STree(\trie)$ with a lowest common ancestor (LCA) data structure in linear preprocessing time so that the LCA of two given nodes can be found in $O(1)$ time~\cite{DBLP:conf/latin/BenderF00}.
We then compute the \emph{jump-arrays} and some links toward leaves, as follows.

\vspace*{1pc}
\noindent $\Dest$ \textbf{\textsf{arrays and jump-arrays.}}
For each character $c$ occurring in $\trie$,
let $\Dest_c$ be an array such that
$\Dest_c[r] = i$ if the leaf $\ell_i$ is the lexicographically $r$-th leaf that has an in-coming suffix link labeled $c$.
For convenience, we add an extra element at the end of $\Dest_c$ that stores $\infty$.
Also, we sometimes regard $\Dest_c$ as a list of the corresponding leaves $\ell_i$.
We then define the \emph{jump-array} $\jump_c$ which leads us to
the ending point of the run of adjacent leaves in $\STree(\trie)$
for each given leaf $\ell_i$ in $\Dest_c$.
Formally, $\jump_c[i] = i+1$ if $\Dest_c[i+1]-\Dest_c[i] > 1$, and
$\jump_c[i] = \min\{ j \mid j > i \text{ and } \Dest_c[j+1]-\Dest_c[j] > 1\}$ otherwise.
See Figure~\ref{fig:suffix_tree_two} for examples.
Note that the total sizes of $\Dest_c$ and $\jump_c$
for all $c$ occurring in $\trie$ is $O(n)$
since these are linear in the number of suffix links of the leaves.
Also, we can compute the arrays $\Dest_c$ and $\jump_c$
for all characters $c$ occurring in $\trie$ in linear total time,
by radix-sorting all
the pairs $(c, i)$ of a character and a leaf-index such that $c \in \inSL(\ell_i)$.
Simultaneously, we associate each pair $(c, i)$ with
its \emph{rank} in $\Dest_c$, i.e., the integer $r$ with $\Dest_c[r] = i$.

\vspace*{1pc}
\noindent \textbf{\textsf{Links toward leaves.}} 
For each non-leaf node $v$ in $\STree(\trie)$,
we store links from $v$ to the leftmost and the rightmost leaves in $\leaves(v)$.
Note that we can compute them easily by performing a depth-first traversal on $\STree(\trie)$.
We further store another link defined below in every non-root node $v$.
For each non-root node $v$ in $\STree(\trie)$,
let $c_v$ be the first character of its in-coming edge label,
and further let $\lmDest(v)$ be (some representation of)
the leftmost leaf in $\leaves(v)$ such that its leaf-index is in $\Dest_{c_v}$ if such a leaf exists, and $\nil$ otherwise.
For technical reasons, we implement $\lmDest(v)$ as the index of the entry of $\Dest_{c_v}$ corresponding to the specified leaf.
See Figure~\ref{fig:suffix_tree_two} for examples.
Whenever we access an entry of $\Dest_c$, we always take this link from a given node $v$ such that $c_v = c$.
Our implementation of $\lmDest(v)$ ensures that we can access
the corresponding entry in $O(1)$ time, without the need to
search the array $\inSL(\ell)$ on some leaf $\ell \in \leaves(v)$ for character $c_v$, which would use $O(\log \sigma)$ time.
Next, we show how to compute the $\lmDest$ for all nodes in linear total time.
\begin{figure}[htb]
  \centering
  \includegraphics[width=0.7\linewidth]{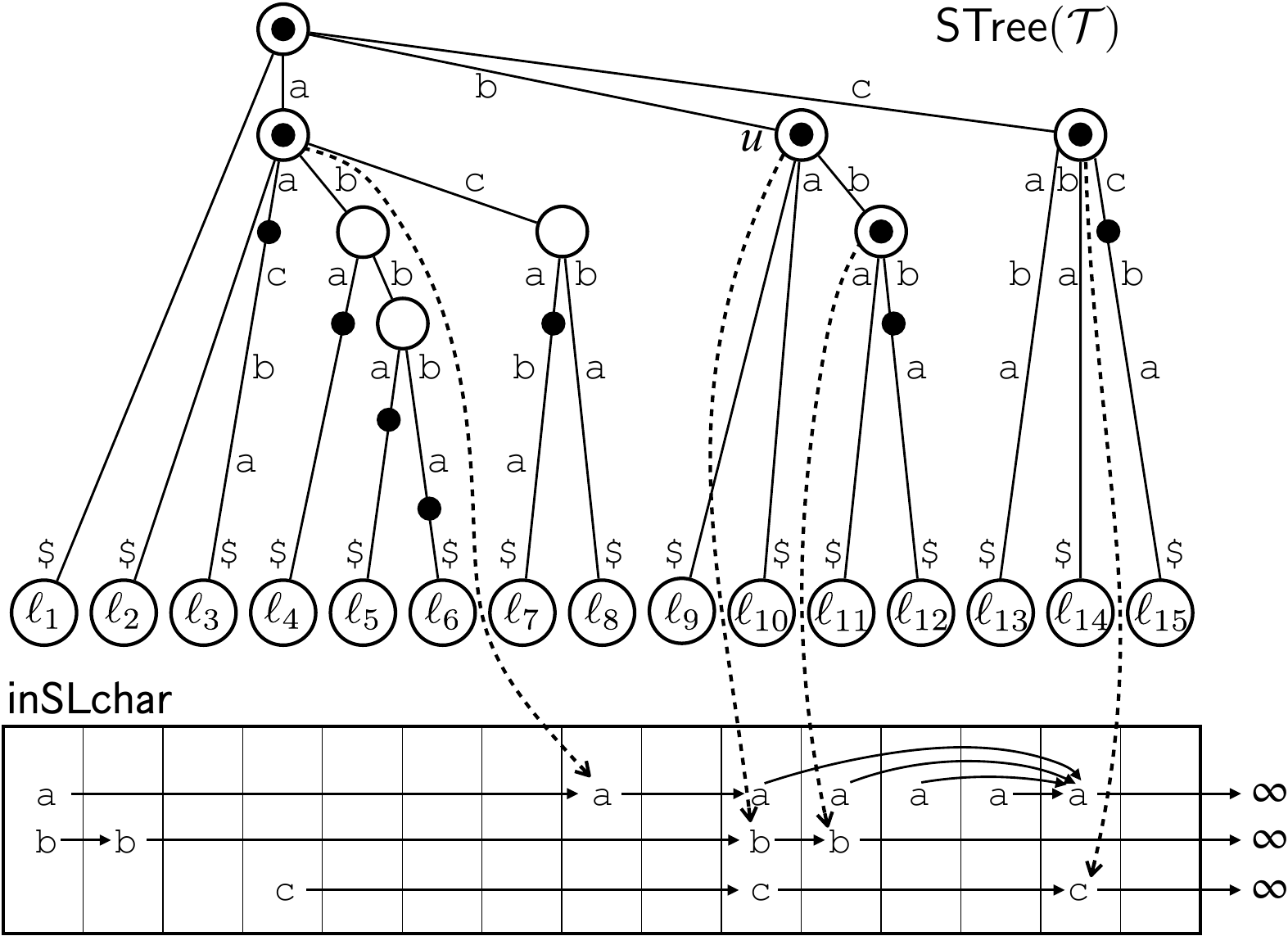}
  \caption{
    The suffix tree $\STree(\trie)$ of Figure~\ref{fig:suffix_tree} with some auxiliary data structures.
By reading array $\inSL$ horizontally for each character $c$, we can obtain the array $\Dest_c$.
Here, $\Dest_{\mathtt{a}} = (1,8,19,10,11,12,13,14,\infty)$,
    $\Dest_{\mathtt{b}} = (1,2,10,11,\infty)$, and
    $\Dest_{\mathtt{c}} = (4,10,14,\infty)$.
Each element in $\Dest_c$ is represented by the character $c$ and is aligned along the corresponding leaf.
    Also, the arc from each element of $\Dest_c$ represents the value stored in the corresponding entry of the $\jump_c$ array.
    Each dotted arrow from an internal node $v$ denotes $\lmDest(v)$.
    For example, $\lmDest(u)$ of the node $u$ representing string $\mathtt{b}$ points to
    the third entry of $\Dest_{\mathtt{b}}$,
    which is represented by the $\mathtt{b}$ under the leaf $\ell_{10}$,
    since the first character $c_u$ of the label of $u$'s parent edge is $c_u = \mathtt{b}$,
    and the leaf $\ell_{10}$ is
    the leftmost leaf in $\leaves(u)$ such that its leaf-index is in $\Dest_\mathtt{b}$.
  }
  \label{fig:suffix_tree_two}
\end{figure}

As a warm-up, we fix a character $c$ and describe an algorithm
which computes $\lmDest(v)$ for all nodes $v$ such that $c_v = c$, in $O(n)$ time.
We execute left-to-right depth-first traversal on $\STree(\trie)$
with a stack $\mathit{st}_c$ storing nodes.
When we descend to a node $v$, push $v$ to $\mathit{st}_c$ if $c_v = c$.
When we ascend from a node $v$, pop $v$
and set $\lmDest(v) \leftarrow \nil$
if the top of $\mathit{st}_c$ is $v$.
When we arrive at a leaf $\ell_k$, scan $\inSL(\ell_k)$ and if $c \in \inSL(\ell_k)$~(i.e., $k \in \Dest_c$),
pop all nodes $v$ in $\mathit{st}_c$,
and set $\lmDest(v) \leftarrow r$ with $\Dest_c[r] = k$.
Recall that such index $r$ has been associated with $(c, k)$ for $c \in \inSL(\ell_k)$.
During the traversal, $\mathit{st}_c$ stores all nodes $v$ on the path from the root to the current node
s.t. $c_v = c$ but $\lmDest(v)$ is not determined yet.
Thus, when arriving at a leaf $\ell_k$ with $c \in \inSL(\ell_k)$,
we can correctly compute $\lmDest(v)$ for all $v$ s.t. $\Dest_c[\lmDest(v)] = k$.

Now, let us remove the constraint of fixing the character $c$,
and perform linear-time computation of $\lmDest(v)$ for all internal nodes $v$.
To compute $\lmDest(v)$ for \emph{all} non-root nodes $v$,
we run the above algorithm for all $c \in \Sigma$ simultaneously in a single depth-first traversal.
For this, we use a length-$n$ array $\mathbf{S}$ of stacks
in which each entry $\mathbf{S}[c]$ plays the role of $\mathit{st}_c$ as the above.
While performing a left-to-right depth-first traversal,
when we descend to a node $v$, push $v$ to $\mathbf{S}[c_v]$,
and when we ascend from a node $v$, pop $v$
and set $\lmDest(v) \leftarrow \nil$
if the top of $\mathbf{S}[c_v]$ is $v$.
When we arrive at a leaf $\ell_k$, then scan $\inSL(\ell_k)$ and for every $c \in \inSL(\ell_k)$, pop all nodes $v$ in $\mathbf{S}[c]$ and set $\lmDest(v) \leftarrow r$ with $\Dest_c[r] = k$.
The correctness relies on the aforementioned algorithm, and the running time is linear in the size of $\STree(\trie)$ including the suffix links,
that is, $O(n)$.

 \subsection{Computing phase}
In the computing phase, we traverse on $\STree(\trie)$
and visit the loci representing palindromes in non-decreasing order of the lengths of palindromes.
Let us assume that we are now located at a node representing a palindrome $p$ occurring in $\trie$.
There are the two following cases: (A) node $p$ is an explicit node, and (B) node $p$ is an implicit node.

\paragraph*{Algorithm for case (A)}
For the case (A),
we can enumerate all trie nodes whose length-$|p|$ prefixes are maximal palindromes $p$
in output-sensitive time by using data structures precomputed.
Let $v$ be a child of node $p$ and $c$ be the first character of the label of edge $(p, v)$.
Let $\ell_{\mathit{left}}$ and $\ell_{\mathit{right}}$ be the leftmost and the rightmost leaves in $\leaves(v)$, respectively.
Namely, they correspond to the $\mathit{left}$-th and $\mathit{right}$-th smallest suffixes in $\trie$, respectively.
We initialize the current-leaf-index $\mathit{curr} \leftarrow \Dest_{c_v}[\lmDest(v)]$ and
the previous-leaf-index $\mathit{prev} \leftarrow \mathit{left}-1$.
While $\mathit{curr} \le \mathit{right}$,
output trie nodes $\mathbf{v}_i$ for all $i$ with $\mathit{prev} < i < \mathit{curr}$ if ${\mathit{curr}-1} \not\in\Dest_c$,
and then set $\mathit{prev} \leftarrow \mathit{curr}$
and $\mathit{curr} \leftarrow \Dest_c[\jump_c[\mathit{curr}]]$.
At last, after breaking out from the while-loop condition, output trie nodes $\mathbf{v}_i$ for all $i$ with $\mathit{prev} < i \le \mathit{right}$
if ${\mathit{curr}-1} \not\in\Dest_c$.
See also Figure~\ref{fig:jumpalgo} for illustration.
\begin{figure}[ht]
  \centering
  \includegraphics[width=0.8\linewidth]{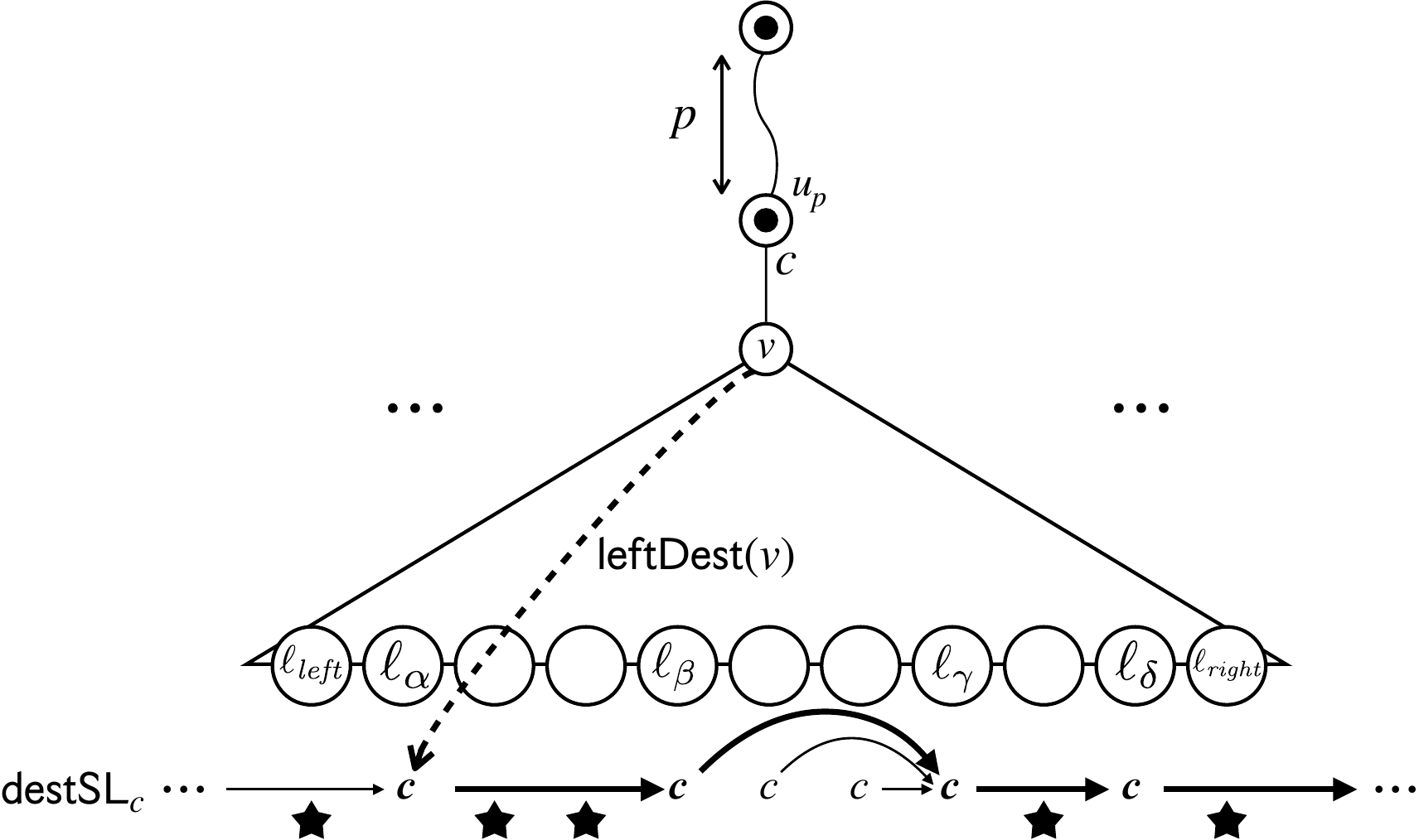}
  \caption{
    The black stars indicate the leaves in $\leaves(v)$ corresponding to occurrences of palindrome $p$
    which are followed by character $c$ but cannot be expanded with $c$, i.e., occurrences of $p$ as maximal palindromes.
    In the computing phase, starting from node $v$,
    we visit the entries in $\Dest_c$ corresponding to the leaves $\ell_\alpha$, $\ell_\beta$, $\ell_\gamma$ and $\ell_\delta$
    in this order,
    and output all the leaves indicated by black stats.
    The jump-arrays values that are used in this step are depicted in bold.
    Note that when we arrive at $\ell_\gamma$, we output nothing
    since none of the leaves between $\ell_\beta$ and $\ell_\gamma$ inclusive corresponds to an occurrence of a maximal palindrome.
  } \label{fig:jumpalgo}
\end{figure}By Observation~\ref{obs:maximal_pals_on_stree}, this algorithm correctly reports all the occurrences of $pc$
(without duplication)
such that the occurrence of $p$ is a maximal palindrome.
Next, to evaluate the running time, we show the next lemma:
\begin{lemma} \label{lem:num_of_jump}
  In the above algorithm for a node $v$,
  when we use jump-pointers three times, at least one maximal palindrome is reported.
\end{lemma}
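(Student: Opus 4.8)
The plan is to classify every pass of the while-loop as \emph{reporting} (it outputs at least one trie node) or \emph{non-reporting}, and to prove that a non-reporting pass is always immediately followed by a reporting one, with the single pass that starts the loop as the only exception. First I would record the exact reporting criterion: a pass reports iff $\mathit{curr}-1 \notin \Dest_c$. Indeed, by Observation~\ref{obs:maximal_pals_on_stree} the occurrences of $p$ that are maximal palindromes are precisely the leaves of $\leaves(v)$ whose index is \emph{not} in $\Dest_c$, so the reported indices $i$ with $\mathit{prev} < i < \mathit{curr}$ are exactly the maximal ones. For every pass after the first, $\mathit{prev}$ is the index of the previously visited member of $\Dest_c$, hence $\mathit{curr}-1 \notin \Dest_c$ forces $\mathit{curr}-1 > \mathit{prev}$ and the reported range is nonempty; so the criterion is faithful.

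Next I would establish a dichotomy for where a jump can land. Each assignment $\mathit{curr} \leftarrow \Dest_c[\jump_c[\mathit{curr}]]$ puts $\mathit{curr}$ either on a \emph{left-free} leaf, meaning $\mathit{curr}-1 \notin \Dest_c$, or on the \emph{right end of a maximal run} of consecutive members of $\Dest_c$, meaning $\mathit{curr}+1 \notin \Dest_c$ while $\mathit{curr}-1 \in \Dest_c$. This is read off from the definition of $\jump_c$: the unit step $\jump_c[i]=i+1$ is taken only when $\Dest_c[i+1]-\Dest_c[i]>1$, which lands on a left-free leaf, while the other branch advances to the last element of a run. In the left-free case the current pass reports. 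In the run-end case the current pass does \emph{not} report, but now the element of $\Dest_c$ following $\mathit{curr}$ exceeds $\mathit{curr}+1$, so the next jump is forced to be a unit step onto a left-free leaf; hence the next pass reports. This proves the key claim: every non-reporting pass whose $\mathit{curr}$ was produced by a jump (that is, every non-reporting pass except the first) is immediately followed by a reporting pass.

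The key claim then yields the lemma by a short window argument. Consider any three consecutive passes that each execute a jump, say with indices $m, m+1, m+2$. If $m \ge 2$, the key claim applied to the pair $(m, m+1)$ shows at least one of them reports. If $m = 1$, then $m+1 = 2$ is already produced by a jump, so the key claim applied to $(m+1, m+2)$ shows at least one of those reports. Either way at least one maximal palindrome is output among the three passes, which is exactly the statement. The first pass is the sole exception to the dichotomy because its $\mathit{curr}$ is taken from $\lmDest(v)$ rather than from a jump, so it may sit in the interior of a run (both neighbors in $\Dest_c$) and thereby fail to report together with the second pass; the bound three, rather than two, is precisely what absorbs this initial irregularity.

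The step I expect to be the main obstacle is the boundary bookkeeping needed to make the dichotomy rigorous. Since $\Dest_c$ is a \emph{global} array over all leaves of $\STree(\trie)$, the membership tests such as $\mathit{curr}-1 \in \Dest_c$ and the forced unit step may reference leaves lying outside $\leaves(v)$, to the left of $\mathit{left}$ or to the right of $\mathit{right}$; I would have to check that these do not corrupt the dichotomy at the two ends of the leaf interval. The most delicate case is the degenerate start $\mathit{curr}=\mathit{left}$, where the window $(\mathit{prev},\mathit{curr})$ is empty and so the first pass fails to report even when $\mathit{curr}-1 \notin \Dest_c$; I would verify that this, like the interior-of-run start, can only lengthen the initial non-reporting stretch to two passes and can never create a non-reporting pair among passes of index $\ge 2$.
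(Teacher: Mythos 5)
Your proof is correct and follows essentially the same route as the paper's: both rest on the dichotomy, read directly off the definition of $\jump_c$, that a jump lands either across a gap in $\Dest_c$ (which forces a report) or at the right end of a run of adjacent leaves (no report, but then the next jump is necessarily a unit step across a gap), so two consecutive in-loop jumps always yield a report. The only difference is bookkeeping---you charge the slack third jump to the irregular first pass, whose $\mathit{curr}$ comes from $\lmDest(v)$ and may sit mid-run or give an empty window, whereas the paper charges it to the last jump possibly exiting $\subtree(v)$---and your explicit boundary checks (the global nature of $\Dest_c$ at the interval ends, the degenerate start $\mathit{curr}=\mathit{left}$) are if anything more careful than the paper's own argument.
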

\begin{proof}
  We assume that we use jump-pointers at least three times on leaves in $\subtree(v)$.
  Then, at least two jumps are performed inside $\subtree(v)$.
  (The last jump, which is toward the outside of $\subtree(v)$, is the only exception.)

  Let $i$ be the current leaf-index.
  By the definition of jump-pointers, if $\Dest_c[i+1] - \Dest_c[i] > 1$, then $\jump_c[i] = i+1$.
  Then, our algorithm reports at least one maximal palindrome
  since we skip at least one leaf between $\Dest_c[i]$ and $\Dest_c[\jump_c[i]] = \Dest_c[i+1]$ both exclusive.
Otherwise, namely if $\Dest_c[i+1] - \Dest_c[i] = 1$,
  then $\jump_c[i]$ points to the ending point of the run of adjacent leaves in $\Dest_c$.
  In this case, we report no maximal palindromes.
  However, in the next step, we track $\jump_c[j]$ where $j = \jump_c[i]$.
  Then, we fall into the first case $\Dest_c[j+1] - \Dest_c[j] > 1$ since $j$ is the ending position of a run
  (see also Figure~\ref{fig:jumpalgo}).
  Thus, a maximal palindrome is reported while the second jump.
Therefore, performing jumps twice inside $\subtree(v)$ results in a maximal palindrome being reported.
\end{proof}
Thus, the running time of the above algorithm for node $v$ is linear in the output size,
or is constant if there is no maximal palindrome to output.

After these processes for $v$, we retrieve the node representing palindrome $cpc$ if it exists.
While computing maximal palindromes we can obtain
the rightmost leaf $\ell_R$ such that $\ell_R\in\leaves(v)$ and $R\in\Dest_c$.
We move to the source leaves $\ell$ and $\ell'$ of the suffix links which respectively point to
$\ell_L$ and to $\ell_R$, both labeled by $c$
where $L = {\Dest_{c_v}[\lmDest(v)]}$ is the leaf-index specified by $\lmDest(v)$.
We then compute the LCA $u'$ of $\ell$ and $\ell'$.
If the string-depth of $u'$ equals $|cpc|$, the node $u'$ is the desired node corresponding to $cpc$.
Otherwise, the locus on the parent-edge of $u'$ with the string-depth of $|cpc|$ is the (implicit) node corresponding to $cpc$.

\paragraph*{Algorithm for case (B)}
For the case (B), the precomputed data structures are not helpful.
Instead, we employ the following lemmas for designing an efficient algorithm.

\begin{lemma}\label{lem:implicit_pal_node}
  For any trie $\trie$,
  there is at most one implicit node $v$ on each edge of $\STree(\trie)$
  such that $\str(v)$ is a palindrome.
\end{lemma}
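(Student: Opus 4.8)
The plan is to argue by contradiction. Suppose some edge $(x,y)$ of $\STree(\trie)$ carries two distinct palindromic implicit nodes. Writing $P = \str(y)$, these two loci correspond to two prefixes $u = P[1..a]$ and $w = P[1..b]$ of $P$ with $a < b < |P|$ and $u, w \in \mathbb{P}$ (both strictly interior to the edge). First I would extract a periodicity from the palindromic structure: since $u$ is a prefix of the palindrome $w$, its reverse $u = \rev{u}$ is a suffix of $w = \rev{w}$, so $u$ is a border of $w$, and hence $w$ has period $q = b - a$; that is, $P[i] = P[i+q]$ for all $1 \le i \le a$.

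Next I would convert the hypothesis that $u$ and $w$ lie on a single non-branching edge into an occurrence-set identity, and then run an infinite-descent argument on node depths. Because the locus of $u$ is an implicit node strictly inside the edge ending at $y$, the leaf set below it coincides with $\leaves(y)$; consequently, for every node $\mathbf{v}$ of $\trie$, the string $u$ is a prefix of $\suf(\mathbf{v})$ \emph{if and only if} $P$ is a prefix of $\suf(\mathbf{v})$. Let $Y$ be this common set of trie nodes, which is nonempty since $\subtree(y)$ has leaves. Now take any $\mathbf{v} \in Y$ and let $\mathbf{v}'$ be its ancestor obtained by climbing $q$ edges toward $\Root$; since suffixes are read leaf-to-root, $\suf(\mathbf{v}')$ is exactly $\suf(\mathbf{v})$ with its first $q$ characters removed. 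Its length-$a$ prefix equals $P[q+1..b] = P[1..a] = u$ by the periodicity, so $u$ is a prefix of $\suf(\mathbf{v}')$; by the occurrence-set identity, $P$ is then also a prefix of $\suf(\mathbf{v}')$, forcing $|\suf(\mathbf{v}')| \ge |P|$ and hence $\mathrm{depth}(\mathbf{v}) \ge |P| + q$, with $\mathbf{v}' \in Y$ as well. Iterating shows that every node of $Y$ would have depth at least $|P| + kq$ for every $k \ge 1$, which is impossible in a finite trie. This contradiction proves the lemma.

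The periodicity of the first step and the finiteness conclusion are routine; the crux is the occurrence-set identity together with the ancestor bookkeeping. The main obstacle I anticipate is making precise that $u$ being a genuine implicit node (equal to neither $x$ nor $y$) forces its leaf set to coincide exactly with $\leaves(y)$, yielding the prefix equivalence, and that discarding the first $q$ symbols of $\suf(\mathbf{v})$ corresponds to the well-defined $q$-th ancestor of $\mathbf{v}$. Some care is also needed with the length inequalities so that this $q$-th ancestor exists and is long enough to exhibit $u$ as a prefix; these all follow from $|P| > b = a + q$, which guarantees that the descent never stalls before the finiteness contradiction is reached.
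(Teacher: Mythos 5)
Your proof is correct, and while it shares the paper's two opening moves, it reaches the contradiction by a genuinely different mechanism. The paper's proof also starts from the facts that two implicit loci on one edge have identical leaf sets below them, and that the shorter palindrome, being a proper prefix of the longer palindrome, is also a proper suffix (a border) of it; but it then concludes by counting: equal leaf sets mean equal numbers of occurrences in $\trie$, whereas the border property makes the shorter palindrome occur at least twice inside the longer one, ``contradicting'' the equality of occurrence counts. You instead convert the border into a period $q = b-a$, convert the leaf-set equality into a prefix-equivalence of $u$ and $P$ over all trie suffixes, and run an infinite descent on depths by climbing $q$ edges at a time inside the occurrence set $Y$. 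What your route buys is rigor at precisely the point where the paper is terse: in a trie, the occurrences of the shorter palindrome induced by distinct occurrences of the longer one need not all be distinct (the $q$-th ancestor of one starting node can coincide with the starting node of another occurrence), so the step from ``occurs at least twice in $\str(u)$'' to a strict inequality of occurrence counts actually requires an extremal argument --- e.g., taking a minimal-depth occurrence --- which is exactly what your descent supplies in iterated form. Your argument also makes visible that only the implicitness of the \emph{shorter} locus is used (to force its leaf set to equal $\leaves(y)$); the implicitness of the longer one plays no role. The trade-off is length: the paper's counting remark is two sentences, while your descent costs a paragraph of ancestor bookkeeping, but it closes a small gap rather than merely re-deriving the published proof.
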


\begin{proof}
  Suppose on the contrary that there are two implicit nodes $v$ and $u$
  on the same edge of $\STree(\trie)$
  such that both $\str(v)$ and $\str(u)$ are palindromes.
  Assume without loss of generality that $|\str(v)| < |\str(u)|$.
  Since $v$ and $u$ are on the same edge,
  we have $\leaves(v)= \leaves(u)$ which means that
  the number of occurrences of $\str(v)$ and $\str(u)$ in $\trie$ must be equal.
  However, since $\str(v)$ is a proper prefix of $\str(u)$
  and both of them are palindromes,
  $\str(v)$ is also a proper suffix of $\str(u)$,
  meaning that $\str(v)$ occurs at least twice in $\str(u)$.
This contradicts that the number of occurrences of $\str(v)$ and $\str(u)$ in $\trie$ are equal.
Thus there exists at most one implicit node on each edge of $\STree(\trie)$.
\end{proof}

The following lemma states that implicit palindromic nodes have a sort of monotonicity.
See also Figure~\ref{fig:implicit_descendant} for illustration.
\begin{lemma}\label{lem:monotonicity_of_implicit_pal}
  Let $p$ and $q$ be palindromes such that $p$ is a proper prefix of $q$.
  If $p$ corresponds to an implicit node on an edge $(u, v)$ in $\STree(\trie)$ and its following character is $c$,
  then $q$ corresponds to an implicit node on an edge in $\subtree(v)$ and its following character is also $c$.
\end{lemma}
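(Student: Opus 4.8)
The plan is to reduce the entire statement to a single claim: \emph{every occurrence of $q$ in $\trie$ is immediately followed by the character $c$} (reading one more step toward the root). First I would unpack the hypothesis. Since $p$ is an implicit node on edge $(u,v)$ with following character $c$, its locus lies strictly inside this edge, so $\leaves(p) = \leaves(v)$; that is, the occurrences of $p$ in $\trie$ are exactly the leaves of $\subtree(v)$, and each of them continues along the edge toward $v$, so each occurrence of $p$ is followed by $c$ (the character $\str(v)[|p|+1]$). Picking any one occurrence of $q$ and restricting to its prefix $p$ immediately gives $q[|p|+1] = c$, hence $pc$ is a prefix of $q$; this will later help locate $q$ relative to $v$.

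The heart of the argument exploits that $q$ also \emph{ends} with $p$. Because $p$ is a palindrome and a prefix of the palindrome $q$, the last $|p|$ characters of $q$ equal $\rev{p} = p$. Now fix an arbitrary occurrence of $q$: it reads the path from some trie node $\mathbf{v}_j$ upward to an ancestor $\mathbf{w}$ at distance $|q|$. Let $\mathbf{m}$ be the node on this path at distance $|q|-|p|$ from $\mathbf{v}_j$. Reading upward from $\mathbf{m}$, the first $|p|$ characters spell exactly the length-$|p|$ suffix of $q$, which equals $p$; hence $\mathbf{m}$ gives a genuine occurrence of $p$, i.e.\ its leaf lies in $\leaves(v)$. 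The key coincidence is that the character following this occurrence of $p$ --- the edge immediately above $\mathbf{w}$, namely $\suf(\mathbf{m})[|p|+1]$ --- is \emph{the same} as the character following $q$. Since every occurrence of $p$ is followed by $c$, I conclude that this (arbitrary) occurrence of $q$ is also followed by $c$, which establishes the claim.

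Finally I would collect the consequences. The claim says $q$ has a unique following character $c$ and that $qc \in \Substr(\trie)$; therefore the locus of $q$ is neither a branching explicit node (those have at least two following characters) nor a leaf (those have none), so it is an implicit node whose following character is $c$. To place it in $\subtree(v)$, note $\leaves(q) \subseteq \leaves(p) = \leaves(v)$, and that the path spelling $q$ extends that of $p$ along edge $(u,v)$ toward $v$ (since $pc$ is a prefix of both $q$ and $\str(v)$); it cannot stop strictly between $p$ and $v$, since that would make $q$ a second palindromic implicit node on edge $(u,v)$, contradicting Lemma~\ref{lem:implicit_pal_node}, and it cannot stop at the explicit node $v$ since the locus of $q$ is implicit. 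Hence $q$ lies on an edge strictly inside $\subtree(v)$, as required. I expect the main obstacle to be bookkeeping the leaf-to-root orientation carefully, so that ``the length-$|p|$ suffix of $q$'' genuinely corresponds to an occurrence of $p$ in $\trie$ whose following character coincides with that of $q$; the remaining points (that $q$ actually occurs, and the degenerate behaviour near $\Root$) are minor and handled by the fact that $q$ is represented in $\STree(\trie)$ and that $\$$ occurs only on the topmost edge.
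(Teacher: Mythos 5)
Your proof is correct and takes essentially the same route as the paper's: both rest on the observation that $p$, being a palindromic proper prefix of the palindrome $q$, also occurs as a suffix of $q$, so the unique following character $c$ of the implicit locus of $p$ transfers to every occurrence of $q$, forcing the locus of $q$ to be implicit with following character $c$, and both then invoke Lemma~\ref{lem:implicit_pal_node} to push that locus off the edge $(u,v)$ and into $\subtree(v)$. The differences are presentational only: you argue the transfer directly on occurrences (and explicitly dispose of leaf loci and the $\$$/root degeneracies, which the paper glosses over), whereas the paper phrases the same idea as a contradiction against $q$ being an explicit node.
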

\begin{proof}
Since $p$ corresponds to an implicit node, the only character following $p$ is $c$ for all occurrences of $p$ in $\trie$.
  Also, since the palindrome $p$ is a proper prefix of the palindrome $q$, $p$ occurs in $q$ also as a proper suffix.
  If we assume that $q$ corresponds to an explicit node, then there have to be two distinct characters following $q$, and $p$ as well, a contradiction.
Thus, $q$ corresponds to an implicit node and its following character is $c$.
Also, by Lemma~\ref{lem:implicit_pal_node}, the implicit node cannot exist on the edge $(u, v)$, and thus, it is in $\subtree(v)$.
\end{proof}
Namely, once an implicit node $u$ corresponding to a palindrome is found,
any palindrome corresponding to a descendant of $u$ is guaranteed to be followed by the same character (see also Figure~\ref{fig:implicit_descendant}).
Based on this monotonicity, we design an algorithm for the case (B).
\begin{figure}[htb]
  \centering
  \includegraphics[width=0.5\linewidth]{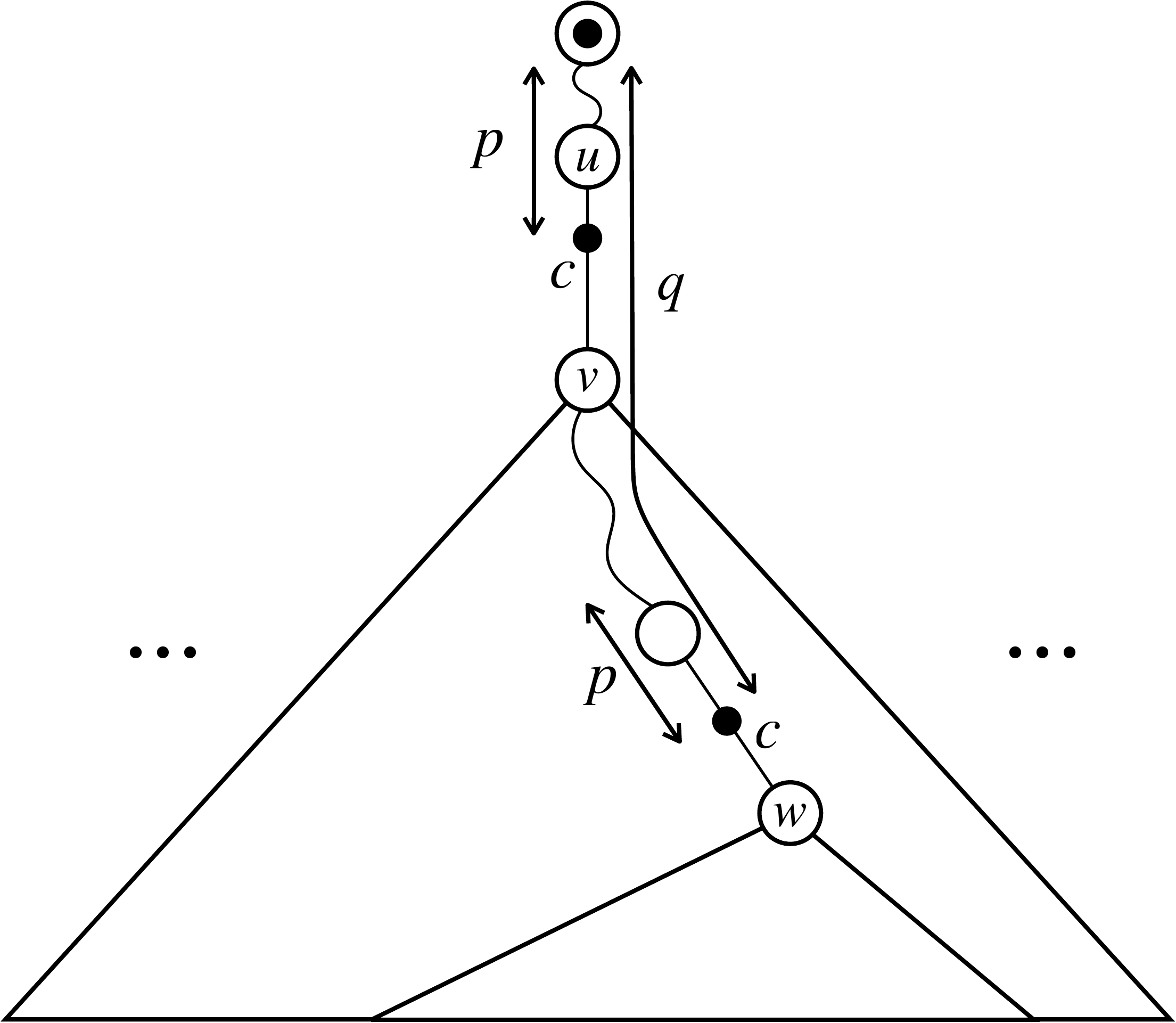}
  \caption{
Illustration for the case (B).
    In our algorithm, we visit the locus of palindrome $p$ before that of palindrome $q$ since $q$ is longer than $p$.
    After we have visited the locus of $p$, each node $w'$ in $\subtree(v)$ have links $\mathsf{left}(w')$ w.r.t. the character $c$.
    When we visit the locus of $q$ later, we use $\mathsf{left}(w)$ and $\Dest_{c}$ to report all occurrences of $q$ which are maximal palindromes.
}
  \label{fig:implicit_descendant}
\end{figure}

At each step of the algorithm, $v.\mathsf{mark} = \mathtt{1}$ means that we already visited an implicit ancestor representing some shorter prefix palindrome.
Initially, we set $u.\mathsf{mark} \leftarrow \mathtt{0}$ for every node $u$.
Let $v$ be the shallowest explicit node below $p$ and $c$ be the character following $p$.
If $v.\mathsf{mark} = \mathtt{0}$, every $v^\prime \in \subtree(v)$ is not marked yet
since we consider palindromes in non-decreasing order of the lengths.
We traverse $\subtree(v)$ and mark all nodes in the subtree.
Also, while traversing the subtree,
we compute the links $\mathsf{left}(v')$ from each $v' \in \subtree(v')$ defined as follows:
$\mathsf{left}(v')$ is the leftmost leaf in $\leaves(v')$ such that
its leaf-index is in $\Dest_{c}$ if such a leaf exists, and $\nil$ otherwise.
In other words, $\mathsf{left}(v')$ is a variant of $\lmDest(v')$ where the character $c$ to be considered is fixed.
Traversing $\subtree(v)$ and
computing $\mathsf{left}(v')$ for all nodes $v'$ in $\subtree(v)$ can be done
in time linear in the size of $\subtree(v)$
plus the total number of incoming suffix links of $\leaves(v)$
as in the computing of $\lmDest$ described in the previous subsection.
Once we create the links $\mathsf{left}$,
we can compute all occurrences of maximal palindrome $p$ and find the next node for $cpc$ in a similar way to the case (A).

If $v.\mathsf{mark} = \mathtt{1}$, there is an implicit ancestor we already visited.
By the above procedures, node $v$ already has link $\mathsf{left}(v)$,
and hence, we can output all maximal palindromes $p$ in output linear time.
To summarize, the following theorem holds:
\begin{theorem}
  For a given trie $\trie$ with $n$ edges over an integer alphabet $\Sigma = [1..n]$,
  $\MPal(\trie)$ and $\DPal(\trie)$ can be computed in $O(n)$ time and space.
\end{theorem}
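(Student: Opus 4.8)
The plan is to prove the theorem by \emph{assembling} the preprocessing and computing phases developed above and then verifying, separately, correctness and the $O(n)$ bounds on time and space. Correctness is essentially immediate from Observations~\ref{obs:distinct_pals_on_stree} and~\ref{obs:maximal_pals_on_stree}: the level-wise breadth-first traversal reaches the locus of every palindrome $p \in \DPal_k(\trie)$ before any palindrome of length $k+2$, and for each $c \in \childchar(p)$ the two observations characterize exactly when $cpc$ is a new distinct palindrome and which trie nodes $\mathbf{v}_i$ witness an occurrence of $p$ as a maximal palindrome. Thus the base step outputs $\DPal_0(\trie)$ and $\DPal_1(\trie)$ correctly, and the inductive step, applied in increasing $k$, outputs every element of $\DPal(\trie)$ and $\MPal(\trie)$ exactly once. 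By Theorem~\ref{lem:num_pal_trie} there are only $|\DPal(\trie)| \le n+1$ palindromic nodes, so the traversal visits $O(n)$ loci.

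For the space bound I would first note that $\STree(\trie)$ together with its LCA data structure occupies $O(n)$ space, and that every auxiliary structure introduced in the preprocessing phase is linear: the arrays $\Dest_c$ and $\jump_c$ have total size proportional to the number of leaf suffix links, which is $O(n)$, and the per-node links ($\lmDest$, the leftmost/rightmost leaves, and the $\mathsf{left}$ and $\mathsf{mark}$ fields of case (B)) add $O(1)$ per suffix-tree node. The preprocessing time is likewise $O(n)$: $\STree(\trie)$ is built in $O(n)$ by Theorem~\ref{theo:stree_linear_integer}, the LCA structure in $O(n)$, the $\Dest_c$ and $\jump_c$ arrays in $O(n)$ by radix-sorting the pairs $(c,i)$ with $c\in\inSL(\ell_i)$, and all $\lmDest(v)$ in the single $O(n)$ depth-first traversal described above.

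The heart of the proof is bounding the \emph{computing phase}, which I would split exactly along cases (A) and (B). In case (A) each pair consisting of a palindromic node $p$ and a child edge $(p,v)$ labeled $c$ is handled once; since the palindromic nodes number $O(n)$ and the suffix-tree edges number $O(n)$, there are $O(n)$ such pairs. For a fixed pair the jump-driven scan costs $O(1 + t)$, where $t$ is the number of maximal palindromes it reports, because Lemma~\ref{lem:num_of_jump} guarantees that every three consecutive jump-pointer uses emit at least one maximal palindrome; retrieving the locus of $cpc$ then costs $O(1)$ via two suffix-link hops followed by a single LCA query. Summing over all pairs and invoking $|\MPal(\trie)| = 2n-l$ from Theorem~\ref{lem:num_pal_trie} yields $O(n) + O(|\MPal(\trie)|) = O(n)$. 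In case (B), Lemma~\ref{lem:implicit_pal_node} bounds the number of implicit palindromic nodes by the number of edges, hence $O(n)$; whenever such a node with shallowest explicit descendant $v$ is reached with $v.\mathsf{mark}=\mathtt{0}$, we pay a one-time traversal of $\subtree(v)$ to install the $\mathsf{left}$ links, and otherwise ($v.\mathsf{mark}=\mathtt{1}$) we reuse them in $O(1)$ plus output.

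The step I expect to be the main obstacle is the amortized analysis of case (B), together with the soundness of reusing a \emph{single} $\mathsf{mark}$ bit while the installed $\mathsf{left}$ links are \emph{character-specific}. The traversal cost is controlled by charging each marking to the nodes (and their incoming suffix links) it marks: because we descend only into unmarked subtrees, every suffix-tree node is marked at most once, so these traversals total $O(n)$. The subtler point is correctness of the reuse: a node $v$ is marked only as part of the subtree of some earlier, strictly shorter implicit palindromic node $p_0$, and since the later palindrome $p$ and $p_0$ are both prefixes of $\str(v)$ with $|p_0| \le |p|$, the string $p_0$ is a prefix of $p$; Lemma~\ref{lem:monotonicity_of_implicit_pal} then forces $p_0$ and $p$ to be followed by the \emph{same} character $c$, which is exactly the character for which the stored $\mathsf{left}$ links were computed. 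This ensures the reused links answer the correct query, so case (B) is both correct and $O(n)$ amortized. Combining the three analyses gives total $O(n)$ time and space, establishing the theorem.
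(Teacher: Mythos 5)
Your proof is correct and takes essentially the same route as the paper's: linear preprocessing via Theorem~\ref{theo:stree_linear_integer}, radix-sort, and depth-first traversal; Lemma~\ref{lem:num_of_jump} to amortize jump-pointer uses against reported maximal palindromes in case (A); and the each-node-marked-at-most-once argument to bound the subtree traversals in case (B). Your explicit check that reusing a single $\mathsf{mark}$ bit is sound even though the installed $\mathsf{left}$ links are character-specific is exactly the content of Lemma~\ref{lem:monotonicity_of_implicit_pal}, which the paper invokes in its algorithm description rather than inside the theorem's proof, so this is added detail rather than a different approach.
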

\begin{proof}
  In the preprocessing phase, $\STree(\trie)$ can be constructed in $O(n)$ time and space (Theorem~\ref{theo:stree_linear_integer}).
  An LCA data structure on $\STree(\trie)$ can be built in $O(n)$ time and space~\cite{DBLP:conf/latin/BenderF00},
  and the remaining data structures can also be built in $O(n)$ time
  by radix-sort and depth-first traversal on $\STree(\trie)$.
  
  For the case (A) in the computing phase,
  most of the procedures can be done in constant time per an edge of $\STree(\trie)$.
The exception is the number of times jump-arrays are used can not be bounded by a constant per an edge.
However,
by Lemma~\ref{lem:num_of_jump},
that can be charged to the maximal palindromes to output, and thus, the total execution time is $O(n+|\MPal(\trie)|) = O(n)$.
For the case (B) in the computing phase, traversing the subtree of a node appears to be the bottleneck.
  However, the total time to traverse the subtrees is $O(n)$ since each node will be marked at most once throughout the algorithm.
  The remaining procedures can be done in a total of $O(n)$ time (the analysis is the same as that for the case (A)).
\end{proof}

\subsection{Computing \texorpdfstring{$\eertree(\trie)$}{eertree(T)}}

\begin{theorem} \label{theo:eertree}
For a given trie $\trie$ with $n$ edges over an integer alphabet $\Sigma = [1..n]$,
the edge-sorted $\eertree(\trie)$ with suffix links can be computed in $O(n)$ time and space.
\end{theorem}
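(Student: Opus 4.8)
The plan is to show that the eertree (both the even-tree and the odd-tree, with all edges and suffix links) can be assembled as a byproduct of the $\DPal(\trie)$/$\MPal(\trie)$ computation developed in the previous subsection, paying only $O(n)$ additional work. Recall that $\eertree(\trie)$ has exactly $|\DPal(\trie)| \le n+1$ nodes by Theorem~\ref{lem:num_pal_trie}, so it suffices to produce, for each distinct palindrome, (i) its tree edge from the correct parent palindrome, and (ii) its suffix link, each in amortized constant time. The key observation is that our level-wise traversal over palindromic nodes in $\STree(\trie)$ already discovers each $p \in \DPal(\trie)$ exactly once, in non-decreasing order of length, and that whenever we extend a palindrome $p$ to $cpc$ we have the locus of $p$ in $\STree(\trie)$ in hand together with the extending character $c$.

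First I would create, during the computing phase, one eertree node per palindromic locus found: when we report $\varepsilon$ and the single characters we initialize the two roots and the length-$1$ nodes, and when Observation~\ref{obs:distinct_pals_on_stree} certifies a new palindrome $cpc \in \DPal_{k+2}(\trie)$ we allocate its eertree node. The tree edge is immediate: $cpc$ is obtained from $p$ by prepending/appending $c$, so we simply add an edge labeled $c$ from the eertree node of $p$ to the eertree node of $cpc$ (routing into the even-tree or odd-tree according to the parity of $k+2$, exactly as in the definition in Section~\ref{sec:preliminaries}). Since we maintain a pointer from each $\STree(\trie)$ palindromic locus to its eertree node, inserting the edge is $O(1)$ per distinct palindrome, for $O(n)$ total. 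Because Theorem~\ref{theo:stree_linear_integer} yields the edge-sorted suffix tree and we process the children $c \in \childchar(p)$ in lexicographic order, the eertree edges out of each node are produced in sorted order, giving the claimed edge-sorted eertree.

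The main obstacle is computing the \emph{suffix links} of the eertree, i.e.\ linking the node of a palindrome $p$ to the node of the longest proper palindromic suffix $q$ of $p$. The plan is to reuse the suffix links of $\STree(\trie)$ that already drive our construction. When we form $cpc$ from $p$, its longest proper palindromic suffix is the longest palindrome that is a suffix of $cpc$ and shorter than $cpc$; this is exactly the palindrome reached by taking the suffix-tree suffix link from the locus of $cpc$ (which strips the leading $c$, yielding $pc$) and then locating the longest palindromic suffix of $pc$. I would handle this by the standard eertree bookkeeping adapted to tries: maintain for each newly created node a pointer obtained by following suffix links from the parent $p$'s suffix-link target and checking, via the same $\inSL$/LCA machinery used in case~(A), whether a $c$-extension closes into a palindrome; the amortized cost of this search is constant because each suffix-link traversal strictly decreases a length potential that only increases by one per created node, so the total number of suffix-link steps is $O(|\DPal(\trie)|) = O(n)$. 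Finally, the two roots $\varepsilon$ and $\bot$ get their suffix links set to $\bot$ by definition. Combining the $O(n)$-time suffix tree and auxiliary structures from the preprocessing phase with these $O(n)$ additional operations yields the edge-sorted $\eertree(\trie)$ with suffix links in $O(n)$ time and space.
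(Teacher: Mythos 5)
Your construction of the eertree topology and the edge-sorted property is sound and matches the paper: nodes are allocated as palindromes are discovered in non-decreasing length order, each new palindrome $cpc$ gets an edge labeled $c$ from $p$, and sortedness is inherited from the edge-sorted suffix tree of Theorem~\ref{theo:stree_linear_integer}. The gap is in the suffix links. Your amortization --- ``each suffix-link traversal strictly decreases a length potential that only increases by one per created node'' --- is the analysis of the \emph{online, single-string} eertree construction, where the walk triggered by position $i+1$ resumes from (roughly) where the walk for position $i$ stopped, so the chain depth behaves like a scan pointer. In your level-wise construction over a trie there is no such scan pointer: every child $c_i p c_i$ of an eertree node $p$ restarts its own walk from $\slink(p)$, so the costs do not telescope across branches. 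Concretely, consider the trie (paths read leaf-to-root) with an edge labeled $b$ below the root, then a path spelling $a^m$, and for each $k = 1, \ldots, m$ an extra leaf attached by an edge labeled $b$ to the $a$-path node at depth $k$. This trie has $n = \Theta(m)$ edges and $b a^k b \in \DPal(\trie)$ for every $k$. For each newly created node $b a^k b$, your walk starts at $\slink(a^k) = a^{k-1}$ and must descend the entire chain $a^{k-1}, a^{k-2}, \ldots, \varepsilon$ (no $a^j$ with $j < k$ is preceded by $b$) before falling back to $\slink(b a^k b) = b$, costing $\Theta(k)$ steps; the total is $\Theta(n^2)$, not $O(n)$. (A secondary issue: even locating the eertree node $cqc$ from $q$ requires a child-by-character lookup, which costs $\Theta(\log \sigma)$ with sorted edge lists --- this is exactly the ``na\"ive suffix-link-based'' overhead the paper's introduction warns about.)

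The paper sidesteps suffix-link chasing entirely with one observation you are missing: for a \emph{palindrome} $p$, the longest proper \emph{suffix} palindrome of $p$ equals the longest proper \emph{prefix} palindrome of $p$. Prefixes of $\str(v)$ correspond to ancestors of the locus $v$ in $\STree(\trie)$, so after the $\DPal(\trie)$ computation one inserts all $O(n)$ palindromic loci as explicit nodes into $\STree(\trie)$ and performs a single traversal maintaining the nearest palindromic ancestor; this hands each palindromic node its suffix-link target in $O(1)$ time, $O(n)$ overall, with no walks and no per-character lookups. If you want to salvage your route instead, you would need something like the series/quick-link machinery of Rubinchik and Shur, which costs extra space or a $\log \sigma$ factor; the prefix-equals-suffix trick is both simpler and genuinely linear.
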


\begin{proof}
Since our algorithm computes $\DPal(\trie)$
in increasing order of the lengths,
we can easily compute the tree-topology of $\eertree(\trie)$
in conjunction with $\DPal(\trie)$ in $O(n)$ time and space.
Recall that the algorithm of Theorem~\ref{theo:stree_linear_integer} builds the edge-sorted suffix tree.
Thus, the edges of the obtained $\eertree(\trie)$ are already sorted.

  Recall that a palindrome $q$ is the longest proper suffix palindrome of
  another palindrome $p$ iff $q$ is the longest proper prefix palindrome of $p$.
  Thus, we can compute the suffix links of the nodes of $\eertree(\trie)$
  by a standard traversal on $\STree(\trie)$ in which
  all the palindromic nodes are explicitly inserted as in Figure~\ref{fig:suffix_tree}.
\end{proof}
 \section{Answering queries to find palindromes in sub-paths}\label{sec:algo_strings}

A trie $\trie$ can be regarded as a compact representation of a set $\mathcal{S}_\trie$ of strings that are the path strings from the leaves to the root of $\trie$.
This section presents how to report
all distinct/maximal palindromes in each string in $\mathcal{S}_\trie$ upon query,
in
output-linear time
with $O(n)$ space, where $n$ is the size of $\trie$.
We remark that the total length of the strings in $\mathcal{S}_\trie$
can be as large as $O(n^2)$, and thus, our $O(n)$-size representation of the palindromes for $\mathcal{S}_\trie$ is space-efficient.
In addition, our algorithms which follow permit us to query
on any sub-path for maximal palindromes,
and on any suffix-path for distinct palindromes.

\subsection{Maximal palindromes in a sub-path}
As for maximal palindromes in any path of a given trie, we obtain the following result:
\begin{theorem}
  After $O(n)$-time preprocessing on the input trie $\trie$,
  given a sub-path $(\mathbf{u}$, $\mathbf{v})$ in $\trie$ and a position $\alpha$ that is either a node or an edge on the path $(\mathbf{u}, \mathbf{v})$, the maximal palindrome centered at $\alpha$ in string $\str(\mathbf{u},\mathbf{v}) \in \Substr(\trie)$ can be computed in $O(1)$ time.
\end{theorem}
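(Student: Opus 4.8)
The plan is to imitate, on the trie, the classical constant-time substring-palindrome query for a single string: precompute the radius of the (whole-object) maximal palindrome at every center, then answer a query by clipping that radius to the two ends of the queried range. For a string $S$, the maximal palindrome centered at $\alpha$ inside $S[l..r]$ has radius $\min(R(\alpha),\alpha-l,r-\alpha)$, where $R(\alpha)$ is the Manacher radius, and I want to reproduce this identity on $\trie$. The first step is to note that, since every node of $\trie$ has a unique parent and its out-going edge labels are distinct, the maximal palindrome of $\trie$ centered at a fixed center $\alpha$ is \emph{unique}: the root-ward extension is forced by the single parent path, and the downward extension is forced as well, because at each step the required character is dictated by the palindrome condition and at most one child edge carries it. Hence for every center $\alpha$ (node or edge) I can define $R(\alpha)$, the radius of this unique maximal palindrome, together with its deepest endpoint node $\mathbf{b}(\alpha)$. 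Both are byproducts of the computation of $\MPal(\trie)$ in Section~\ref{sec:algo}, so recording them for all centers costs $O(n)$ time and space. I would additionally equip $\trie$ itself with an LCA structure and reuse the level-ancestor structure of Section~\ref{sec:preliminaries}, so that $\lca$, depth, root-path distance, and the ancestor of a node at a prescribed depth are all available in $O(1)$ time.

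For a query $(\mathbf{u},\mathbf{v},\alpha)$ with $\mathbf{u}$ a descendant of $\mathbf{v}$ and $\alpha$ on the path, write $S=\str(\mathbf{u},\mathbf{v})$. A palindrome centered at $\alpha$ reads up toward $\mathbf{v}$ and then the root; since this path is unique it coincides with the up-side of the trie's maximal palindrome, so on the up-side the only new constraint is the boundary $\mathbf{v}$, contributing $\mathrm{dist}(\alpha,\mathbf{v})$. The down-side reads toward $\mathbf{u}$, and this is where the trie's forced palindromic continuation need not follow the queried path. The point is that the forced down-path (ending at $\mathbf{b}(\alpha)$) and the path from $\alpha$ to $\mathbf{u}$ are two descents from $\alpha$, so they agree exactly up to $\lca(\mathbf{u},\mathbf{b}(\alpha))$, and the palindrome must break on the down-side at the first disagreement. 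I claim
$$ \text{radius in } S \;=\; \min\!\big(R(\alpha),\ \mathrm{dist}(\alpha,\mathbf{v}),\ \mathrm{dist}(\alpha,\lca(\mathbf{u},\mathbf{b}(\alpha)))\big), $$
which I would verify through the three cases: $\mathbf{u}$ strictly above $\mathbf{b}(\alpha)$, $\mathbf{u}$ on or below $\mathbf{b}(\alpha)$, and $\mathbf{u}$ off the forced path. In each case the third term equals the true down-extension, capped appropriately; and since $\lca(\mathbf{u},\mathbf{b}(\alpha))$ always lies between $\alpha$ and $\mathbf{u}$, the boundary $\mathbf{u}$ needs no separate term. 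Every quantity on the right is an $O(1)$ LCA/depth query, so the radius, and hence the two endpoint nodes of the palindrome (recovered by level-ancestor queries from $\alpha$ and from $\mathbf{u}$ going upward), are returned in $O(1)$ time.

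The hard part is the down-side. Unlike a single string, descending in $\trie$ is not along a fixed path, and no $O(n)$-space outward-LCE structure on a trie is available, since the forward-trie suffix tree is $\Omega(n^2)$. Making the query work therefore hinges on two things: first, proving uniqueness of the trie's maximal palindrome at each center so that $R(\alpha)$ and $\mathbf{b}(\alpha)$ are well defined and computable within the $O(n)$ run of Section~\ref{sec:algo}; and second, the observation that the divergence between the forced palindromic continuation and the queried descent is exactly an LCA in $\trie$, which replaces the unavailable outward-LCE query. Edge-centered (odd-length) and node-centered (even-length) palindromes are handled identically once distances from $\alpha$ are measured in characters, and the degenerate centers with $R(\alpha)=0$ fall out of the same formula, completing the $O(n)$-preprocessing, $O(1)$-query bound.
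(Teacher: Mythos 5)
Your proposal is correct and takes essentially the same route as the paper: the paper likewise precomputes, for each center $\alpha$, a pointer to the deep endpoint $\mathbf{s}_\alpha$ of the (unique) trie-maximal palindrome centered at $\alpha$, and answers a query by computing $\mathbf{w} = \lca(\mathbf{s}_\alpha, \mathbf{u})$ and returning $\min\{2|(\mathbf{w},\mathbf{z})|, 2|(\mathbf{z},\mathbf{v})|\}$ (plus one for edge centers), which is exactly your clipping formula with divergence detected by a single LCA in $\trie$. The only cosmetic difference is that your explicit $R(\alpha)$ term in the minimum is redundant, since $\lca(\mathbf{u},\mathbf{b}(\alpha))$ is never deeper than $\mathbf{b}(\alpha)$, and you spell out the per-center uniqueness of the trie-maximal palindrome that the paper leaves implicit.
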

\begin{proof}
  In the preprocessing,
  we transform the occurrence representation of maximal palindromes into
  the pair $(\mathbf{s}, \alpha)$ of the starting node $\mathbf{s}$ and its center $\alpha$,
  which is either a node or an edge on the suffix-path from $\mathbf{s}$.
  Such transformation can be done in linear time by traversing $\trie$:
  We maintain an array $\mathcal{N}$ of size $\height(\trie)$ that stores nodes in the suffix path from the current node
  so that we can access any ancestor of given depth in constant time.
  When we visit a node $\mathbf{v}$ and find a maximal palindrome $(\mathbf{v}, k)$,
  we can compute the $k/2$-th shallower locus $\alpha$ by using array $\mathcal{N}$ in constant time.
  Simultaneously, we store the pointer from each $\alpha$ to the starting node $\mathbf{s}_{\alpha}$ of the maximal palindrome centered at $\alpha$.

  Given nodes $\mathbf{u}$, $\mathbf{v}$ and center position $\alpha$ as a query,
  we first compute the lowest common ancestor of $\mathbf{s}_\alpha$ and $\mathbf{u}$. Let $\mathbf{w}$ be the LCA node. 
  If the center position $\alpha$ is a node $\mathbf{z}$,
  then the length of the maximal palindrome centered at $\alpha$ in $\str(\mathbf{u},\mathbf{v})$ is $\min\{2|(\mathbf{w}, \mathbf{z})|,2|(\mathbf{z}, \mathbf{v})|\}$.
  Otherwise, namely, if the center position $\alpha$ is an edge $(\mathbf{x}, \mathbf{y})$,
  then the length of the maximal palindrome centered at $\alpha$ in $\str(\mathbf{u},\mathbf{v})$ is $\min\{2|(\mathbf{w}, \mathbf{x})|,2|(\mathbf{y}, \mathbf{v})|\} + 1$.
\end{proof}

\subsection{Distinct palindromes in a suffix-path}

  In general, a substring palindrome $p$ in a string $S$ may occur more than once in $S$.
  It suffices to report a representative of the occurrences of each palindrome for computing the set of distinct palindromes.
The algorithm that computes $\DPal(S)$ for a string $S$ in~\cite{DBLP:journals/ipl/GroultPR10} reports their leftmost occurrences in $S$.
Also, the algorithm that computes $\DPal(\trie)$ for a trie $\trie$ in~\cite{FunakoshiNIBT19} reports the first occurrence of each palindrome during a depth-first traversal in $\trie$.
Below, we show how to efficiently report $\DPal(w)$ for any suffix-path string $w$ in the trie $\trie$.
\begin{theorem}\label{thm:dpalquery}
  After $O(n)$-time preprocessing on the input trie $\trie$,
  given a node $\mathbf{u}$ in $\trie$,
  all the distinct palindromes in the suffix-path string $\suf(\mathbf{u}) \in \Suffix(\trie)$ can be enumerated in output-linear time.
\end{theorem}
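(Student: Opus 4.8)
The plan is to reduce the query to enumerating a sparse set of \emph{marked} ancestors of $\mathbf{u}$. The starting point is the classical one-character incremental lemma for palindromes, read in the ``prepend'' direction: since $\suf(\mathbf{w}) = c\cdot\suf(\parent(\mathbf{w}))$ where $c$ is the label of the edge $(\mathbf{w},\parent(\mathbf{w}))$, passing from $\parent(\mathbf{w})$ to $\mathbf{w}$ can create at most one new distinct palindrome, and the only candidate is the longest palindromic prefix $\lpp(\mathbf{w})$ of $\suf(\mathbf{w})$. I call $\mathbf{w}$ \emph{introducing} when $\lpp(\mathbf{w})\notin\DPal(\suf(\parent(\mathbf{w})))$; this is an intrinsic property of $\mathbf{w}$. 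Because $\suf(\mathbf{a})$ is a suffix of $\suf(\mathbf{w})$ whenever $\mathbf{a}$ is an ancestor of $\mathbf{w}$, once a palindrome appears on a suffix-path it persists in all deeper descendants, so along the path from $\Root$ to $\mathbf{u}$ each distinct palindrome of $\suf(\mathbf{u})$ is contributed exactly once, by the unique introducing node whose $\lpp$ equals it. Hence $\DPal(\suf(\mathbf{u}))=\{\varepsilon\}\cup\{\lpp(\mathbf{w})\mid \mathbf{w}\text{ an introducing ancestor-or-self of }\mathbf{u}\}$, a disjoint union whose size is exactly the output size.

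Given the introducing nodes, I would answer a query in output-linear time as follows. In a single depth-first traversal of $\trie$ I precompute, for every node, a pointer to its nearest introducing proper ancestor, together with its nearest introducing ancestor-or-self. To answer $\mathbf{u}$, I start at the nearest introducing ancestor-or-self of $\mathbf{u}$ and repeatedly follow these pointers up to $\Root$, reporting for each visited introducing node $\mathbf{w}$ the occurrence $(\mathbf{w},|\lpp(\mathbf{w})|)$ of the palindrome $\lpp(\mathbf{w})$, and finally $\varepsilon$. Each step costs $O(1)$ and produces a distinct output, so the whole query is output-linear; the sentinel $\$$ is ignored exactly as in the rest of the paper.

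It remains to precompute, in $O(n)$ total, the locus $\lpp(\mathbf{w})$ and the introducing flag of every node. The loci come essentially for free from the palindromic-node computation of Section~\ref{sec:algo}: $\lpp(\mathbf{w})$ is the deepest palindromic (explicit or implicit) node on the root-to-$\ell_i$ path of $\STree(\trie)$, where $\ell_i$ is the leaf corresponding to $\mathbf{w}$, and all of them are obtained by one depth-first traversal of $\STree(\trie)$ that tracks the deepest palindromic ancestor (each edge carries at most one implicit palindromic node by Lemma~\ref{lem:implicit_pal_node}). For the flags I would use the following reformulation. Writing $p=\lpp(\mathbf{w})$, the occurrences of $p$ as a prefix of a suffix-path are exactly the trie nodes in $\leaves(v_p)$, and the ancestors-or-self of $\mathbf{w}$ are precisely the leaves on the suffix-link chain $\ell_i,\slink(\ell_i),\slink^2(\ell_i),\dots$; thus $\mathbf{w}$ is introducing iff no proper ancestor of $\mathbf{w}$ has its leaf inside $\subtree(v_p)$, equivalently iff $p$ does not recur in $\suf(\mathbf{w})$ at a position $\ge 2$.

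The main obstacle is to decide all these flags in $O(n)$, since the target subtree $\subtree(v_p)$ changes with $\mathbf{w}$ and there is no monotonicity along a trie path. I plan to phrase it as computing, for every $\mathbf{w}$, the quantity $M(\mathbf{w})=\max_{\mathbf{a}}\mathrm{depth}(\lca(\ell_i,\ell_{\mathbf{a}}))$ over proper trie-ancestors $\mathbf{a}$ of $\mathbf{w}$ (the largest Z-value of $\suf(\mathbf{w})$ at a position $\ge 2$): then $\mathbf{w}$ is introducing iff $M(\mathbf{w})<|\lpp(\mathbf{w})|$. To obtain all $M(\mathbf{w})$ I would run a depth-first traversal of $\trie$ with push/pop rollback, keeping the leaves of the current root-to-node path ordered by their left-to-right rank in $\STree(\trie)$; for the current node I query the predecessor and successor of $\ell_i$ among the path-leaves and take the larger of the two LCA string-depths, which suffices because, among any set of leaves, the LCA depth with $\ell_i$ is maximized by an immediate neighbor in leaf order. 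Using the already-built LCA structure together with a find-next-active structure over the static leaf array, whose operations obey the DFS stack discipline, this runs in (near-)linear time; pinning this step down to strictly $O(n)$ is the delicate part I would need to settle, after which all preprocessing is $O(n)$ and Theorem~\ref{thm:dpalquery} follows.
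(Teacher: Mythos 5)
Your reduction is sound and, up to terminology, identical to the paper's: your ``introducing'' nodes are exactly the paper's set $\mathcal{R}$ of nodes $\mathbf{v}$ at which $(\mathbf{v},|\lpp_{\mathbf{v}}|)$ is a \emph{rightmost} occurrence (no recurrence of $\lpp_{\mathbf{v}}$ in $\suf(\mathbf{v})$ at position $\geq 2$), your exactly-once counting argument matches, your pointer-chasing query is the paper's nearest-marked-ancestor chain, and computing the loci $\lpp_{\mathbf{w}}$ as nearest palindromic ancestors of the leaves of $\STree(\trie)$ is also how the paper does it. The genuine gap is exactly the step you flag yourself: deciding the introducing flags. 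Your plan computes $M(\mathbf{w})$, the \emph{maximum} LCP of $\suf(\mathbf{w})$ with the suffixes of its proper ancestors, via predecessor/successor queries on the dynamic set of path-leaves maintained under DFS push/pop. This is strictly more information than the flag requires, and it is where linearity breaks: a dynamic predecessor structure costs $\omega(1)$ per operation in general, and the stack discipline does not rescue you, because rollback is precisely what defeats path compression (union-find with backtracking has a superconstant amortized lower bound), so ``find-next-active with undo'' lands you at $O(n\log\log n)$-type bounds, not $O(n)$. As written, the proof does not establish the claimed $O(n)$ preprocessing.

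The missing idea is that the flag reduces to an \emph{equality} test rather than an LCP maximization. Suppose $p=\lpp_{\mathbf{w}}$ occurs at a proper ancestor $\mathbf{a}$ of $\mathbf{w}$, i.e., $p$ is a palindromic prefix of $\suf(\mathbf{a})$. Then either $\lpp_{\mathbf{a}}=p$, or $p$ is a proper prefix of the palindrome $\lpp_{\mathbf{a}}$ and hence also a proper suffix of it, which yields an occurrence of $p$ starting at a strictly shallower ancestor; iterating (the start depth decreases but cannot drop below $|p|$), some proper ancestor $\mathbf{a}'$ of $\mathbf{w}$ satisfies $\lpp_{\mathbf{a}'}=p$ exactly. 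So $\mathbf{w}$ is introducing if and only if no proper ancestor of $\mathbf{w}$ has the \emph{same} $\lpp$ locus. This is testable in $O(1)$ per node with no predecessor structure at all: store one (initially empty) stack at each palindromic node of $\STree(\trie)$; during a single DFS of $\trie$, when descending into $\mathbf{v}$ push $\mathbf{v}$ onto the stack of the node representing $\lpp_{\mathbf{v}}$, marking $\mathbf{v}$ iff that stack was empty just before the push, and pop when ascending. Total time $O(n)$, and with this substitution the rest of your argument goes through verbatim.
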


\begin{proof}
An occurrence $(\mathbf{v}, |w'|)$ of string $w'$ is called a \emph{rightmost} occurrence in $\trie$
  if there is no occurrence of $w'$ in $\suf(\mathbf{v})$ other than $(\mathbf{v}, |w'|)$.
As the representative of the occurrences of a palindrome $p$ in $\trie$, we choose a rightmost one.
  Our task is to enumerate rightmost occurrences of palindromes in a given suffix $\suf(\mathbf{u}) \in \Suffix(\trie)$.
  Let $\lpp_\mathbf{v}$ be the longest prefix palindrome starting at a trie node $\mathbf{v}$.
  If $(\mathbf{v}, |p|)$ is a rightmost occurrence of a palindrome $p$, then $p = \lpp_\mathbf{v}$ holds
  (if not, $p$ occurs in $\suf(\mathbf{v})$ as a proper suffix of $\lpp_\mathbf{v}$ since they are palindromes, a contradiction).
  Therefore, once we have computed the set 
  $\mathcal{R} = \{ \mathbf{v} \mid (\mathbf{v}, |\lpp_\mathbf{v}|) \text{ is a rightmost occurrence of } \lpp_\mathbf{v} \text{ in } \trie\}$
  and have \emph{marked} the trie nodes in $\mathcal{R}$,
  the set $\DPal(w)$ for $w =\suf(\mathbf{u})$ can be computed
  by searching all the marked nodes on the suffix-path from $\mathbf{u}$ in $\trie$ when a query node $\mathbf{u}$ is given.
The search can be done in $O(|\DPal(w)|)$ time
  by recursively querying the nearest marked ancestor (NMA) from $\mathbf{u}$ until it reaches the root.
  Note that $O(1)$-time static NMA queries can be preprocessed by an $O(n)$-time standard traversal on $\trie$. 

  In the following, we show how to precompute the set $\mathcal{R}$ in linear time.
First, we compute $\lpp_{\mathbf{v}}$ for each node $\mathbf{v}$.
  This can be done in linear time by traversing on $\STree(\trie)$
  since $\lpp_{\mathbf{v}_i}$ corresponds to the nearest palindromic ancestor node of the leaf $\ell_i$ in $\STree(\trie)$.
Also, we store a stack in each palindromic node in $\STree(\trie)$.
Then, we check whether $\lpp_{\mathbf{v}}$ is the rightmost occurrence for each node $\mathbf{v}$
  by performing a depth-first traversal on $\trie$ as follows:
During the traversal, when descending from node $\mathbf{v}$, we push the node onto the stack of the palindromic node representing $\lpp_\mathbf{v}$.
  If the stack is empty just before the push, 
  then $\mathbf{v} \in \mathcal{R}$ holds since $\lpp_{\mathbf{v}}$ is the rightmost occurrence,
  and thus, we mark the node $\mathbf{v}$.
  When ascending from node $\mathbf{v}$, we pop the node from the stack of the palindromic node representing $\lpp_\mathbf{v}$.
\end{proof}

\paragraph*{Longest palindromes in suffix-paths.}
Let $L(\mathbf{v})$ denote the length of a longest palindrome in $\suf(\mathbf{v})$ in the input trie $\trie$.
  For each non-root node $\mathbf{v}$, $L(\mathbf{v}) = \max\{|\lpp_\mathbf{v}|, L(\parent(\mathbf{v})\}$ holds.
  Thus, after computing the lengths of all $\lpp_{\mathbf{v}}$, we can compute all $L(\mathbf{v})$ in a top-down manner.
  Also, we can associate the length $L(\mathbf{v})$ with the corresponding occurrence while computing them.
  As we proved in Theorem~\ref{thm:dpalquery},
  we can compute $\lpp_{\mathbf{v}}$ for each node $\mathbf{v}$ in linear time.
  The next corollary holds:
  \begin{corollary}
    We can compute a longest palindrome in each suffix-path $\suf(\mathbf{v}) \in \Suffix(\trie)$ in $O(n)$ total time.
  \end{corollary}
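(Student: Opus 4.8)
The plan is to reduce the computation of all $L(\mathbf{v})$ to the already-available computation of all $\lpp_\mathbf{v}$ together with a single top-down pass evaluating the recurrence $L(\mathbf{v}) = \max\{|\lpp_\mathbf{v}|, L(\parent(\mathbf{v}))\}$ stated just above the corollary. Almost all of the real work is inherited from Theorem~\ref{thm:dpalquery}, so the only thing left to argue carefully is that this recurrence is correct and that the remaining bookkeeping is linear.

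First I would justify the recurrence. Fix a non-root node $\mathbf{v}$ and consider any substring palindrome $p$ of $\suf(\mathbf{v})$. Since path labels are read in the leaf-to-root direction, $\suf(\mathbf{v}) = \str(\mathbf{v}, \Root)$ and its first character is the label of the edge from $\mathbf{v}$ to $\parent(\mathbf{v})$; deleting that character gives $\suf(\mathbf{v})[2..] = \suf(\parent(\mathbf{v}))$. Thus $p$ either begins at position $1$, in which case it is a prefix palindrome of $\suf(\mathbf{v})$ and hence $|p| \le |\lpp_\mathbf{v}|$ by the definition of the longest prefix palindrome, or it begins at a position $\ge 2$, in which case $p$ lies entirely inside $\suf(\parent(\mathbf{v}))$ and hence $|p| \le L(\parent(\mathbf{v}))$. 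Conversely, $\lpp_\mathbf{v}$ itself, and any longest palindrome of $\suf(\parent(\mathbf{v}))$, is a palindrome of $\suf(\mathbf{v})$, so the maximum is attained and the equality holds. I would take $L(\Root) = 0$ as the base case, corresponding to the empty palindrome.

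Next I would compute all lengths $|\lpp_\mathbf{v}|$. This step is taken verbatim from the proof of Theorem~\ref{thm:dpalquery}: $\lpp_{\mathbf{v}_i}$ is the nearest palindromic ancestor of the leaf $\ell_i$ in $\STree(\trie)$, so once the palindromic nodes have been produced by the algorithm of Section~\ref{sec:algo}, a single traversal of $\STree(\trie)$ yields $\lpp_{\mathbf{v}}$ for every node $\mathbf{v}$ in $O(n)$ total time. I would store alongside each $|\lpp_\mathbf{v}|$ the witness occurrence $(\mathbf{v}, |\lpp_\mathbf{v}|)$.

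Finally I would run one depth-first traversal of $\trie$ in the root-to-leaves direction, setting $L(\Root) = 0$ and, upon first descending to each non-root node $\mathbf{v}$, evaluating the recurrence in $O(1)$ time from the parent's already-computed value $L(\parent(\mathbf{v}))$ and the precomputed $|\lpp_\mathbf{v}|$. To report an actual longest palindrome rather than only its length, I would propagate a witness occurrence: when the maximum is achieved by $|\lpp_\mathbf{v}|$ I record $(\mathbf{v}, |\lpp_\mathbf{v}|)$, and otherwise I copy the witness stored at $\parent(\mathbf{v})$. The total time is $O(n)$, dominated by the computation of the $\lpp$ values; the top-down pass performs constant work per node and is therefore linear. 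The only point requiring genuine care is the case split in the recurrence, and since it rests on the elementary identity $\suf(\mathbf{v})[2..] = \suf(\parent(\mathbf{v}))$, no real obstacle arises.
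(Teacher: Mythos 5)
Your proposal matches the paper's proof: it uses the same recurrence $L(\mathbf{v}) = \max\{|\lpp_\mathbf{v}|, L(\parent(\mathbf{v}))\}$, obtains the $\lpp_\mathbf{v}$ values exactly as in Theorem~\ref{thm:dpalquery} (nearest palindromic ancestor in $\STree(\trie)$), and evaluates the recurrence top-down while carrying witness occurrences, all in $O(n)$ time. The only difference is that you explicitly verify the recurrence via the identity $\suf(\mathbf{v})[2..|\suf(\mathbf{v})|] = \suf(\parent(\mathbf{v}))$, a correct case split that the paper states without proof.
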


 \section{Conclusions and open questions}

This paper proposed the \emph{first} $O(n)$-time algorithm which computes all distinct palindromes and all maximal palindromes in a given trie $\trie$ of size $n$, where the edge labels are drawn from an integer alphabet of size polynomial in $n$.

In the case of a general ordered alphabet of size $\sigma$, one can first sort the edge labels of $\trie$ in $O(n \log \sigma)$ time with $O(n)$ space and replace the edge labels with their lexicographical ranks in range $[1..n]$, so our algorithms work in $O(n \log \sigma)$ time and $O(n)$ space.

It is open whether there exists an $O(n)$-time algorithm for computing distinct/maximal palindromes in a trie
for general ordered alphabets.
To achieve this goal, it is prohibited to construct edge-sorted suffix trees since there is an $\Omega(n \log \sigma)$-time lower bound~\cite{FarachFM00}.

It is known that there can be $\Theta(n^{3/2})$ distinct palindromes in an \emph{unrooted} edge-labeled tree of size $n$~\cite{BrlekLP15,GawrychowskiKRW15}, and all of them can be computed in $O(n^{3/2} \log n)$ time~\cite{abs-2008-13209}. It is open whether there is an $O(n^{3/2})$-time solution for this problem.
 
\vspace*{1pc}
\noindent \textbf{Acknowledgements.}
We would like to thank anonymous referees for their helpful comments.
We also thank Takuya Matsumoto for discussions.

\end{document}